\newtheorem*{lemma}{Lemma}
\newtheorem*{prop}{Proposition}
\newtheorem{theorem}{Theorem}
\newtheorem{dfn}{Definition}
\begin{document}
	
	\title{A No-go theorem for device-independent security in relativistic causal theories}
	\author{R. Salazar}
	
	\email{rb.salazar.vargas@gmail.com}
	
	\affiliation{Institute of Theoretical Physics and Astrophysics, National Quantum Information Centre, University of Gdansk, 80-308 Gdansk, Poland}
	
	\affiliation{Faculty of Applied Physics and Mathematics, National Quantum Information Centre, Gdansk University of Technology, 80-233 Gdansk, Poland}
	
	\affiliation{ Institute of Informatics Faculty of Mathematics, Physics and Informatics,
		University of Gdansk, 80-308 Gdansk, Poland}
	
	\affiliation{Institute of Physics, Jagiellonian University, 30-059 Krakow, Poland}

	\author{M. Kamo\'{n}}
	
	\affiliation{Faculty of Applied Physics and Mathematics, National Quantum Information Centre, Gdansk University of Technology, 80-233 Gdansk, Poland}

	\author{K. Horodecki}
	\affiliation{ Institute of Informatics Faculty of Mathematics, Physics and Informatics,
		University of Gdansk, 80-308 Gdansk, Poland}
	\affiliation{International Centre for Theory of Quantum Technologies,
		University of Gdansk, Wita Stwosza 63, 80-308 Gdansk, Poland}
	
	\author{D. Goyeneche}
	\affiliation{Institute of Physics, Jagiellonian University, 30-059 Krakow, Poland}
	\affiliation{Departamento de F\'{i}sica, Facultad de Ciencias B\'{a}sicas, Universidad de Antofagasta, Casilla 170, Antofagasta, Chile}
	
	\author{D. Saha}
	\affiliation{Institute of Theoretical Physics and Astrophysics, National Quantum Information Centre, University of Gdansk, 80-308 Gdansk, Poland}
          \affiliation{Center for Theoretical Physics, Polish Academy of Sciences, Aleja Lotnik\'{o}w 32/46, 02-668 Warsaw, Poland}
	
	\author{R. Ramanathan}
	\affiliation{Laboratoire d’Information Quantique, Universite Libre de Bruxelles, Belgium}

	\author{P. Horodecki}
	\affiliation{Faculty of Applied Physics and Mathematics, National Quantum Information Centre, Gdansk University of Technology, 80-233 Gdansk, Poland}
	\affiliation{International Centre for Theory of Quantum Technologies,
		University of Gdansk, Wita Stwosza 63, 80-308 Gdansk, Poland}
	
	\begin{abstract}
		
		
		A  crucial task for secure communication networks is to determine the minimum of physical requirements to certify a cryptographic protocol. A widely accepted candidate for certification is the principle of relativistic causality which is equivalent to the disallowance of causal loops. Contrary to expectations, we demonstrate how correlations allowed by relativistic causality could be exploited to break security for a broad class of multi-party protocols (all modern protocols belong to this class).  As we show, deep roots of this dramatic lack of security lies in the fact that unlike in previous (quantum or no-signaling) scenarios the new theory ,, decouples" the property of extremality and that of statistical independence on environment variables. Finally, we find out, that the lack of security is accompanied by some advantage: the new correlations can reduce communication complexity better than the no-signaling ones. As a tool for analysis of this advantage, we characterise relativistic causal polytope by its extremal points in the simplest multi-party scenario that goes beyond the no-signaling paradigm.
		
	\end{abstract}
	
	\date{\today}
	\maketitle
	
	\paragraph{Introduction}
	
	Cryptography covers a plethora of security scenarios, ranging from secure key distribution via protocols such as secret sharing to the two-party cryptographic protocols. These include, e.g., bit commitment or anonymous voting.  However, no doubt the greatest impact on physics that it has, is due to the foundational role of Cryptography in the development of the field of Quantum Information (QI).
	Due to the seminal ideas of Wiesner \cite{Wiesner} and later Bennett and Brassard \cite{BB84}, Quantum Cryptography became a pillar of QI, which upgraded security based on computer assumptions to the one founded on physical laws - that of Quantum Mechanics (QM). 
	
	On the other hand, in parallel, the evolution of QI has led to the relaxation of properties of QM that opens the possibility for new theories (NT) beyond QM, such as e.g., Generalized Probabilistic Theories \cite{Bell-nonlocality}. In this direction, it is compelling to ask:  Does the new theory allow for secure protocols? And if so, in which scenarios? In particular, there is an important question: {\it is, in a given NT, the {\it device-independent} (DI) framework for security certification? (e.g., for the recent development of DI framework  within Quantum Mechanics see \cite{RotemDupuisFawziRenner} and references therein)}. The first research in this direction was the development of a scenario within GPT that includes the so-called {\it non-signaling adversary} - leading to the so-called non-signaling device-independent security (NSDI) \cite{Kent, Kent-Colbeck, Scarani2006, acin-2006-8,AcinGM-bellqkd, masanes-2009-102, hanggi-2009,lit13}. The NSDI scenario relaxes the requirement that the adversary's knowledge and technical skills are bound to quantum theory. The adversary has access to additional resources that are only limited by the physical principle of {\it no-faster than light communication}.
	It is known that, e.g., secure key distribution can be achieved in the NSDI scenario if the devices at the hand of the honest parties are measured at once in parallel \cite{acin-2006-8,AcinGM-bellqkd,masanes-2009-102, hanggi-2009,lit13}. On the other hand, it is believed that attacks based on forward-signaling between the rounds of the experiment can be fatal \cite{Rotem12,Salwey-Wolf}. 
	In what follows, we investigate a recently proposed framework of  Relativistic Causal (RC) theories that goes beyond the no-signaling paradigm.
	We prove some remarkable properties of these theories that result in severe and possibly fatal limitations on the security if we lack additional information about details of the particular RC theory governing physical reality.
	
	Natural law for an NT is to impose the mentioned axiom, that the speed of light $c$ is a limit for
	the speed of communication between two distant parties to  guarantee lack of logical paradoxes like the famous
	{\it grandfather's   paradox} \cite{Grandfather}. Nevertheless,  it was noticed that resources
	with  $c$ as bound for communication's velocity, can {\it in principle} influence
	correlations in a {\it faster than light} manner if located in special space-time
	configurations \cite{Grunhaus,ref1}. The novel scenario 
	admitting those effects - called {\it relativistic causal} as it prevents 
	any causal paradoxes - implies  unexpected  correlation behaviors \cite{ref1}.
	The crucial element here is that if one party  manipulates in faster-than-light
	manner correlations shared by the other parties, the latter can notice it only after mutual communication (limited by the speed of light) or when they meet together.      Here we show that all the cryptographic protocols known to date fail.
	
	\paragraph{Main results} 
	
	Here we prove a fundamental security no-go theorem against two cooperative adversaries who are constrained by relativistic causality
	: they can break secure key distribution by designing devices with 
	correlations as strong as allowed by relativistic causality.
	Our result determines a significant limitation for the security of a broad class of secure key distribution device-independent protocols certified by relativistic causality alone.
	
	We first concentrate on the phenomenon of monogamy of correlations that underpins the security of DI protocols. We present spatio-temporal configurations of measurement events for which monogamy relationships are broken in the relativistic causality setting. In particular, we show that every two-party Bell inequality becomes completely non-monogamous in these configurations.
	This fact makes a crucial difference between relativistic causal theory and that of no-signaling.  
	In the latter, adequately understood extremality of correlations was equivalent to a complete lack of correlations with any external environment. In relativistic causal theories, very strong correlations can always be present due to the above result.
	Moreover, we establish a hacking strategy for two eavesdroppers who exploit the strongest correlations
	allowed by relativistic causality.     The attack is fatal because the eavesdroppers can learn a copy of the honest parties' correlations as a shared secret that they learn together. The secret sharing structure guarantees no faster-than-light communication.
	
	As we show, this hacking strategy breaks any device-independent security protocol, which begins with the same parallel measurement on a device. These protocols consist of the operations called {\it Measurement on Device followed by Local Operations and Public Communication} (MDLOPC) \cite{NSDI}.
	We note here that all known protocols secure against the non-signaling adversary, perform MDLOPC operations (see e.g. \cite{masanes-2009-102,Renner-Hanggi,hanggi-2009}). Indeed, there are known successful attacks of the no-signaling adversary on protocols with sequential rather than parallel measurement \cite{Rotem-Sha,Salwey-Wolf}.

	We prove the no-go with the help of the link between NSDI and secure key agreement \cite{Maurer93} found in \cite{NSDI}.
	We give detailed proof for the case of two honest parties and show how to extend it to the multipartite case (of the so-colled conference key agreement).

	Finally, as a step towards determining the full potential of a single RC eavesdropper, we complete our analysis with a full
	characterization of the simplest relevant setting under relativistic causal constraints, namely the Bell scenario
	of three parties, each performing two binary measurements.
	We prove its advantage in communication complexity reduction.

	\paragraph{Relativistic Causality vs. No-Signaling}
	
	By the very definition, the no-signaling constraints stipulate that the output distributions of any subset of parties are independent of the choices of the inputs of the remaining parties. 
	
	Relativistic causality is the physical principle which states that \textit{ an effect cannot occur from a cause that is not in its past light cone, and similarly a cause cannot have an effect outside its future light cone, i.e., that there be no causal loops in theory} \cite{ref12, ref13}. The no-signaling constraints given in (1)  are sufficient to ensure that theory respects causality, but as shown in \cite{ref1}, there exist space-time configurations of measurement events for the three parties where not all the above constraints are \textit{necessary} to enforce causality. In particular, consider the measurement configuration of Fig. \ref{Figure1}, where the intersection of the future light cones of Alice (A) and Charlie's (C) measurement events is contained within the future light cone of Bob's (B) measurement event.  Here no breakdown of causality occurs if the joint distribution of the outcomes $a, c$ depended on the input $y$ of Bob, provided that the marginal distributions of $a$ and $c$ separately are independent of $y$. This may be intuitively understood from the fact that the information concerning the correlations between $a$ and $c$ is only accessible at a point in the intersection of the future light cones of Alice and Charlie's measurement events. This intersection in this configuration is contained within the future light cone of Bob's measurement event (for a full proof see \cite{ref1, ref14}).
	In other words, the constraints that are both necessary and sufficient for relativistic causality here are 
	
	\begin{eqnarray}
	\label{eq:RC-constraints}
	\sum_{a}P\left(a,b,c\mid x,y,z\right) & = & \sum_{a}P\left(a,b,c\mid x^{\prime},y,z\right)\:\forall \text{\footnotesize $x,x^{\prime},y,z,b,c$}\nonumber \\
	\sum_{c}P\left(a,b,c\mid x,y,z\right) & = & \sum_{c}P\left(a,b,c\mid x,y,z^{\prime}\right)\:\forall \text{\footnotesize $z,z^{\prime},x,y,a,b$}\nonumber \\
	\sum_{b,c}P\left(a,b,c\mid x,y,z\right) & = & \sum_{b,c}P\left(a,b,c\mid x,y^{\prime},z^{\prime}\right)\:\forall \text{\footnotesize $y,y^{\prime},z,z^{\prime},x,a$}\nonumber \\
	\sum_{a,b}P\left(a,b,c\mid x,y,z\right) & = & \sum_{a,b}P\left(a,b,c\mid x^{\prime},y^{\prime},z\right)\:\forall \text{\footnotesize $x,x^{\prime},y,y^{\prime},z,c$} \nonumber \\
	&  & \label{eq:RCcond}
	\end{eqnarray}
	
	Observe that in the above, there is no sum over a pair (a,c). This fact implies a violation of the no-signaling constraints stipulating that the output distributions of {\it any subset of parties} are independent of the choices of the inputs of {\it the remaining parties}. Consequently, the Relativistic Causal polytope of behaviors is richer in structure and of higher dimensionality than the usual no-signaling polytope for this Bell scenario. Notice also that the above constraints are manifestly Lorentz covariant. If the intersection of the future light cones of $A$ and $C$ is contained within $ B$'s future light cone in one inertial reference frame, then this intersection is contained within the future light cone of $B$ in all inertial reference frames. In Appendix A,  we provide the necessary and sufficient constraints imposed by relativistic causality for an arbitrary number of parties in arbitrary globally hyperbolic space-time \cite{ref14}.
	
	\begin{figure}
		\begin{centering}
			\includegraphics[width=0.9\columnwidth]{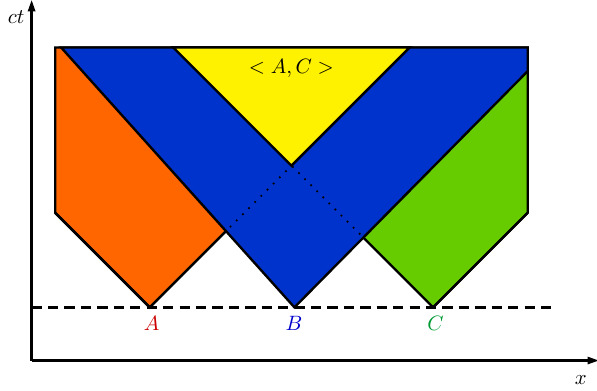}
			\par\end{centering}
		\caption{\label{Figure1} Spacetime configuration of measurement events in the three-party Bell experiment
			when measurements are simultaneous in a common reference frame. The information about two-point Alice and Charlie outputs correlations $ <A,C>$ 
			is only accessible in the yellow region that belongs to causal future of B.}    
	\end{figure}

	\paragraph{ monogamy of non-locality}
	
	One of the most intriguing properties of quantum non-local correlations are monogamy relations. These relations were first observed by Toner for the well-known CHSH inequality \cite{ref27}. These are direct trade-off relations between the amount of violation of an inequality observed by a pair of agents Alice and Bob and the correlations between Alice and Charlie's outcomes. The monogamy of non-locality gives rise to non-trivial bounds on cloning \cite{ref3},  underpins the security of device-independent key distribution and randomness generation protocols against no-signaling adversaries \cite{ref6, ref7} and may help to detect gravitational decoherence \cite{ref33}. 
	
	There was a fundamental question, whether monogamy of non-local correlations could survive in RC
	because it already failed in the  CHSH case \cite{ref1}. We  provide a general Theorem stating that
	actually \textit{no} two-party Bell inequality can exhibit a monogamy relation under the constraints (\ref{eq:RCcond}). This implies that: 
	
	\begin{itemize}
		\item 		The non-local correlations between two space-like  separated devices can become completely non-monogamous in relativistic causal theories when the measurement events of the parties are in accordance with the space-time configuration of Fig. \ref{Figure1}. 
		
	\end{itemize}

Consider a general bipartite Bell inequality $G$ of the form
	\begin{eqnarray}
	\label{eq:bip-Bell0}
	G := \sum_{a,b,x,y} V(a,b,x,y) P(a,b|x,y) \leq \omega_c(G),
	\end{eqnarray}
	Here $\omega_c(G)$ denotes the optimal classical value of the left-hand side of the above inequality. The following Proposition shows that in a three-party Bell test with the measurement events occurring in the space-time configuration in Fig. \ref{Figure1}, relativistic causal correlations exist that allow both pairs of parties A-B and B-C to simultaneously observe the maximum relativistic causal value $\omega_{rc}(G)$ of the inequality.
	\begin{prop}[1]
		Consider any bipartite Bell inequality $G$ of the form in Eq. (\ref{eq:bip-Bell0}). Suppose three players perform their measurements in the space-time configuration of Fig.\ref{Figure1}, and that both Alice-Bob and Bob-Charlie test for the violation of $G$. Then, there exist correlations $\{P(a,b,c|x,y,z)\}$ in RC theories that allow both A-B and B-C to achieve $\omega_{rc}(G)$.  
	\end{prop}
	The  proof is provided in     \cite{ref14}. 
	
	Due to the crucial role of monogamy in device-independent security this Proposition immediately rises a  question about security 
	based on the relativistic causality alone. In the next paragraph  we shall answer this question in the negative, 
	showing a coordinated hacking strategy for a group of eavesdroppers which
	breaks the security of {\it any} device-independent protocol based on a violation of {\it any} Bell inequality. 
	
	However, the most important consequence of the above Proposition goes deeply into the very roots of the structure of the correlations in the considered theory. So far, given a composite physical system and its correlations polytope,  the {\it extremal points} of the latter  always {\it guaranteed a lack of correlations of the system with any external observer}. 
	According to the above Theorem, there is a dramatic change in RC theories: 
	{\it virtually all the extremal points} of bipartite RC polytope describe system potentially correlated 
	with some environment which definitely undermines chances for secure information processing.

	\paragraph{A No-go theorem for device-independent security}
	
	Here we present an attack by two eavesdroppers that breaks any device-independent security protocol.
	The hacking strategy is valid for any number of parties and regardless of the Bell test performed by reliable agents. 
	
	\begin{figure}
		
		\begin{centering}
			\includegraphics[width=0.9\columnwidth]{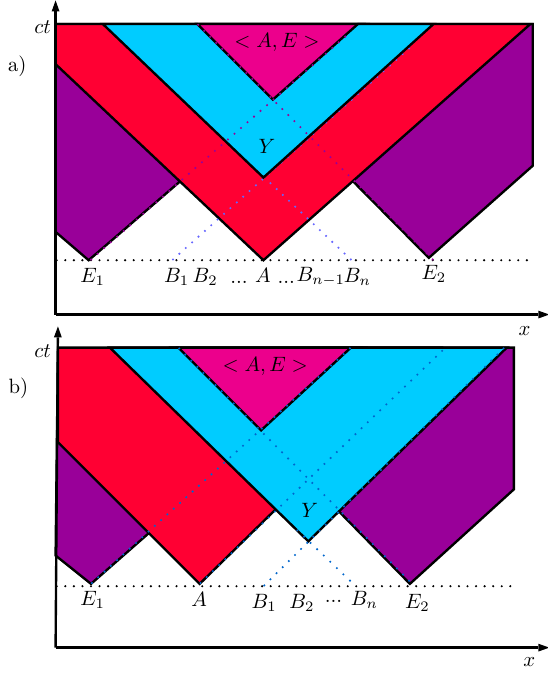}
			\par\end{centering}
		\caption{\label{Figure2} One dimensional illustration of the No-go theorem. Here $E$ stands for the correlation between measurement outputs of the two eavesdropers $E_1$ and $E_2$. The measurement events $A,B_{1},\ldots,B_{N-1}$ of the reliable parties determine a spatial convex hull from which there are two possible situations: a) The party $A$  is in the interior of the convex hull or b) The party $A$ is in te boundary of the convex hull. In both cases the eavesdroppers can choose positions far enough such that   the inputs $Y=y_{1},\ldots y_{N-1}$ can influence the correlation $<A,E>$. This is allowed by RC correlations when the eavesdroppers calibrate the region where information from $<A,E>$  (pink) is accessible only in a region  inside the causal future of all inputs $Y$ (light blue).}
	\end{figure}
	
	We start by considering a multipartite Bell inequality:
	\begin{equation}
	\label{eq:bip-Bell1}
	G\,:=\sum_{\mathbf{r},\mathbf{q}}V(\mathbf{r},\mathbf{q})P\left(\mathbf{r}\mid\mathbf{q}\right)\leq\omega_{c}\left(G\right)
	\end{equation}
	Here, inputs $\mathbf{q}=x,y_{1},\ldots y_{N-1}$ and outputs $\mathbf{r}=a,b_{1},\ldots,b_{N-1}$
	correspond to the devices of the $N$ reliable agents $A,B_{1},\ldots,B_{N-1}$.
	All device-independent security protocols use the violation of a particular Bell inequality
	of the form Eq. (\ref{eq:bip-Bell1}) to ensure the independence of the statistics of any
	eavesdropper from the statistics of the reliable agents. Nevertheless, devices with access to the full set of relativistic causal behaviours, can be correlated with the devices of only two eavesdroppers to reproduce the statistical results of the reliable agents in their entirety and despite maximal violation of the Bell inequality from Eq. (\ref{eq:bip-Bell1}).
	
	\begin{prop}[2]
		Consider any multipartite Bell inequality $G$ of
		the form in Eq. (\ref{eq:bip-Bell1}). Assume that $N$ reliable agents perform
		their measurements in arbitrary space-time positions, in which a violation
		of  $G$,  $\omega^{*}>\omega_{c}$
		is observed. Then, there is a space-time configuration of
		two eavesdroppers' measurements, so that $N$ correlations of those
		measurements will reproduce the statistics of the $N$ reliable agents
		after the eavesdroppers meet or communicate their results. 
		\label{prop:main}
	\end{prop}
	
	An exemplary space-like configuration of the attack  is given in Fig. 1. The proof of the Theorem generalises it to  arbitrary configurations of the reliable agents in 1+3 D space-time [23].

	Configuration obeying relativistic causality in which two eavesdroppers $E_1, E_2$ can use their outputs $c_1, c_2$ to infer the output $a$ of an agent $A$ for any input $x$, is shown in Figure 2.

	A proposal that would seem to be intuitive to restore security is to simply assume that the devices are sufficiently shielded from any influences beyond the usual no-signaling constraints. However, this assumption  is inappropriate in the present investigation for two compelling reasons:

	First, the causal behaviors exploited the attack use point to region influence \cite{ref1, ref14}, which not only is undetectable locally but may be even calibrated by the adversary to be outside of the region occupied by the trusted parties.
	In consequence the shielding from influences can be determined operationally only by the eavesdroppers which renders it a meaningless assumption for the reliable agents.
	
	Second, even more importantly, the point-to-region signaling might be of a physical nature that prevents shielding. We know such a prominent example already: vacuum correlations in quantum electrodynamics is a phenomenon that can not be shielded for the fundamental reasons. 
	
	We are ready to show the main consequence of the above Proposition. We first define the {\it distillable key} $K_{D}^{RC}$. It reads the maximum ratio of the number of key bits divided by the initial number of devices $n$ (in the asymptotic limit) that can be obtained via protocols based on MDLOPC operations from $n$ copies of $P(ABE|XYZ)$. (For details see Definitions $1$ and $2$ in Section D of the Supplemental Material). We will argue now, that $K_D^{RC}$ is zero.
	
	\begin{theorem}[No-go for MDLOPC secure key distribution] For any $P_{AB}\equiv P(A,B|X,Y)$ satisfying non-signaling constraints, there exists a space-time configuration of two eavesdroppers 
		$E_1$ and $E_2$ and a tripartite distribution $P(A,B,E_1,E_2|X,Y)$ satisfying RC constraints, with marginal distribution on $AB$ equal to $P_{AB}$ such that:
		\begin{equation}
		K_D^{RC}(P(ABE_1E_2|XY)) = 0.
		\end{equation}
	\end{theorem}
	
	(For the proof see Appendix D of \cite{ref14}.)

	\paragraph{The Relativistic Causal Polytope}
	The simplest case in which the set of relativistic causal behaviors differs from the set of no-signaling behaviors is the $(3,2,2)$ Bell scenario, according to which three parties perform two binary measurements each.
	The corresponding RC correlations form a polytope that encompasses the usual no-signaling polytope. 
	After our security analysis still, the important general question remains whether there is {\it any } information processing tasks for which the RC correlation work better than the NS ones. 
	For this purpose we provide a complete characterisation of the polytope in terms of the extremal behaviors.
	More specifically, using the software \textit{polymake} \cite{ref23}, we computed \cite{ref23.1} the extremal boxes for the RC polytope in the $(3,2,2)$ scenario \cite{ref14} in the measurement configuration of Fig. \ref{Figure1}. 
	Among the found 153, 600 extremal behaviors, 64 were classical (CL), 2144 no-signaling (NS) and 151,392 relaticistic causal (RC). Considering equivalences up to local transformations and symmetry between Alice and Charlie labs  we got eventually 1 CL, 5 NS and 190 RC equivalence classes shown explicitely in \cite{ref14}) .


	
	We found the dimensionality 
	of the RC polytope in the general $(3,m,n)$ Bell scenario
	(ie. three parties with $m$ measurements of $n$ outcomes each) to be $\mathcal{D}[RC(3,m,n)]= [m(n-1)+1]^{3}+m^{2}(m-1)(n-1)^{2}-1$ (see \cite{ref14}).
	
	
	The above exact characterization of the relativistic causal behaviors  allowed us to reveal the fact that \textit{there are communication complexity scenarios (c.f. \cite{Buhr}), in which some relativistic causal devices outperform all no-signaling devices}. In the Supplementary Material, we show a particular function which in the case of no communication between Alice and Charlie can be guessed by them perfectly by using devices described by RC behaviors, but only with 75  percent chances with NS devices. 
	
	\paragraph{Concluding Remarks}
	
	While the relativistic causality was already known to have unexpected correlation behaviors \cite{ref1}, 
	the lack of monogamy was provided only in one example. 
	
	We succeeded to prove that 
	in the case of bipartite correlations, monogamy is entirely absent - an external environment 
	{\it can always have just a copy of the variable}. This fact has drastic consequences. 
	
	Frist, unlike in no-signaling scenarios, the mutual link between two properties:
	(i) extremality of correlations of a fixed composite system in the corresponding convex set
	and  (ii) lack of correlations with the external environment is broked completely.

	Second, it has given us a hint to prove a security no-go
	for any protocol, based on multipartite Bell inequalities against a coalition of two eavesdroppers  who are in unconstrained space-time positions.
	Shortly speaking we shon that security of key distribution can not be based on relativistic causality alone.

	Going beyond the above results,
	we have provided a full characterisation of the RC correlations in terms of extremal points, which helped us to answer the following question: is there any positive message possible for information processing in the RC scenario? The affirmative answer has been provided: there are cases when RC correlations outperform the no-signaling ones in specific communication complexity reduction problems.

	Let us mention some further potential applications of our results.   The complete characterization of the scenario $(3,2,2)$ in relativistic causal theories provides a useful tool to investigate multi-party Bell non-locality in causal networks, which have been attracting  considerable attention  \cite{causal1,causal2,causal3,causal4,causal5,causal6,ref32}. A further  step would be to investigate the interplay between advantage in communication complexity and hardness in security proofs. 
	Here the following question arises which we leave for forthcoming works : Does an information-theoretic principle such as a non-reduction of communication complexity beyond NS capabilities, ensure the security of protocols? 
	Whether a single eavesdropper can be as powerful as the two also remains an important problem for future research. It seems plausible that the presented attack disallows for {\it any} protocol of secure key distribution. Extending our no-go for all protocols is left as a significant open problem.

	Finally, the present result re-opens the crucial question \textit{whether secure cryptographic protocols can be based on fundamental physical principles only}. These principles are relevant to determine to what extent physics ensures data privacy and randomness. 
	An open question is whether these principles would allow correlations beyond the limits imposed by the non-signaling 
	condition, in which case a new range of phenomena could be studied.

	\section*{Acknowledgements}
	RS acknowledges support of Comision Nacional de Investigacion Ciencia y Tecnologia (CONICYT)  Programa de Formacion Capital Humano Avanzado/Beca de Postdoctorado en el extranjero (BECAS CHILE) 74160002 and John Templeton Foundation, MK, DG and PH acknowledge support of  John Templeton Foundation, KH acknowledges the  grant Sonata Bis 5 (grant number: 2015/18/E/ST2/00327) from the National Science Center and John Templeton Foundation, K. H and P. H. are also supported by the Foundation for Polish Science (IRAP project, ICTQT, contract no.2018/MAB/5, co-financed by EU within Smart Growth Operational Programme, RR acknowledges support of research project "Causality in quantum theory: foundations and applications" of the Foundation Wiener-Anspach and from the Interuniversity Attraction Poles 5 program of the Belgian Science Policy Office under the grant IAP P7-35 photonics@be.", DG acknowledges partial support from Grant FONDECYT Iniciación number 11180474, Chile, and DS acknowledges support of NCN grants 2016/23/N/ST2/02817 and 2014/14/E/ST2/00020  .


\onecolumngrid
	
	\section*{Supplementary Material}

The most fundamental cryptography task is to achieve secure communication between two separated parties - this is the task of secure key distribution. 
We focus on this task in the parallel measurement scenario, as in this case  adversary can not pursue drastic attacks. As one of the main results, we show that contrary to the case of non-signaling theory, {\it there is  no protocol in the parallel measurement scenario, that allows for distributing key secure against RC adversary.}

 
The way to check if security is possible in NS theory, is to test the level of violation of a Bell inequality. Special cases of Bell inequalities with only either $0$ or $1$ coefficients are called games \cite{Bell-nonlocality}. In NS theory if some two party share a device, statistics of which violate a Bell inequality by sufficiently high amount (or in case of games
- win the game with high enough probability), then the so called {\it monogamy } holds : none of them can achieve the same with respect to some other party i.e. win the game with someone with large probability. This fact is  fundamental for secure  communication in QTI an NS theories. On our way to answer the main question we therefore first study if monogamy takes place in RC. Interestingly, we show a drastic violation of this phenomenon in the RC scenario.The above fact leads to our main contribution: That key rate in any Bell violation based security protocol is zero against RC adversaries. 


\section*{Appendix A: General constraints of Relativistic Causal  correlations}
\label{sec:gen-RC}

In this Appendix we introduce a general formalism for the study of
RC constraints in multipartite scenarios and a general space-time. Consider a set of $\left[n\right]=\left\{ 1,\ldots,n\right\} $ parties
with a string of inputs $\boldsymbol{x}=\left\{ x_{1},\ldots,x_{n}\right\} $
and string of outputs $\boldsymbol{a}=\left\{ a_{1},\ldots,a_{n}\right\} $,
$\boldsymbol{S}\subseteq\left[n\right]$ with complement $\boldsymbol{S}^{c}$,
such that $\boldsymbol{a}_{\mathbf{S}}=\left\{ a_{i}\right\} _{i\in\boldsymbol{S}}$
and analogous definition for $\boldsymbol{x}_{\boldsymbol{S}}$. In this scenario,
the usual no-signaling constraints can be written as:

\begin{equation}
P\left(\boldsymbol{a}_{\boldsymbol{S}}\mid\boldsymbol{x}_{\boldsymbol{S}}\right)=\sum_{\boldsymbol{a^{\prime}}_{\boldsymbol{S}^{c}}}P\left(\boldsymbol{a}^{\prime}_{\boldsymbol{S}^{c}},\boldsymbol{a}_{\boldsymbol{S}} \mid\boldsymbol{x}^{\prime}_{\boldsymbol{S}^{c}},\boldsymbol{x}_{\boldsymbol{S}}\right)=\sum_{\boldsymbol{a^{\prime\prime}}_{\boldsymbol{S}^{c}}}P\left(\boldsymbol{a}^{\prime\prime}_{\boldsymbol{S}^{c}},\boldsymbol{a}_{\boldsymbol{S}}\mid\boldsymbol{x}^{\prime\prime}_{\boldsymbol{S}^{c}},\boldsymbol{x}_{\boldsymbol{S}}\right)
\end{equation}
for all $\boldsymbol{x}^{\prime}_{\boldsymbol{S}^{c}}$, $\boldsymbol{x}^{\prime\prime}_{\boldsymbol{S}^{c}}$
In words, these constraints state that the probability distribution of the outputs of any subset of parties is independent from the inputs
of the complementary set of parties. In the multi-partite relativistic
causal set of constraints we also consider the space-time measurement
events $\left\{ M_{a_{1}}^{x_{1}},\ldots,M_{a_{n}}^{x_{n}}\right\} $
in the space-time $\left(\mathcal{M},\,g_{\mu\nu}\right)$ for some
coordinate system (in special relativity this could be a particular
reference frame). For a party $p$ to influence the correlations of a set of parties $\boldsymbol{S}\nsupseteq\left\{ p\right\} $
the event $M_{a_{p}}^{x_{p}}$ must satisfy:

\begin{equation}
\bigcap_{q\in\boldsymbol{S}}J^{+}\left(M_{a_{q}}^{x_{q}}\right)\subset J^{+}\left(M_{a_{p}}^{x_{p}}\right)\label{eq:causalcond}
\end{equation}
In words, this condition states that the causal future $J^{+}\left(M_{a_{p}}^{x_{p}}\right)$ of  party $p$'s measurement event  contains the intersection of the causal futures of the measurement events of all the parties $q \in \boldsymbol{S}$.
Thus, a set $\boldsymbol{K}$ of parties, might signal to another set
$\boldsymbol{S}$ iff for each $\left\{ p\right\} \in\boldsymbol{K}$
the condition (\ref{eq:causalcond}) is satisfied. If $\boldsymbol{K}$
can't signal to $\boldsymbol{S}$ we say $\boldsymbol{K}\nrightarrow\boldsymbol{S}$,
thus the RC conditions are all those of the form: 

\begin{equation}
P\left(\boldsymbol{a}_{\boldsymbol{S}}\mid\boldsymbol{x}_{\boldsymbol{S}}\right)=\sum_{\boldsymbol{a^{\prime}}_{\boldsymbol{S}^{c}}}P\left(\boldsymbol{a}^{\prime}\mid\boldsymbol{x}^{\prime}\right)=\sum_{\boldsymbol{a^{\prime\prime}}_{\boldsymbol{S}^{c}}}P\left(\boldsymbol{a}^{\prime\prime}\mid\boldsymbol{x}^{\prime\prime}\right)\: \textrm{iff} \: \forall  \boldsymbol{K}\subseteq\boldsymbol{S}^{c}, \: \boldsymbol{K}\nrightarrow\boldsymbol{S} \label{eq:RCconstraint}
\end{equation}
 Of course, in general this definition has redundant constraints and in general
a  subset of these constraints can determine the full set. By definition the RC constraints are a subset of no-signaling constraints, therefore no-signaling boxes  satisfy the RC constrains while the opposite
is not always true. An important remark to be made here is that for any spacetime
and spacelike separated parties we have: 
\begin{equation}
P\left(a_{p}\mid x_{p}\right)=\sum_{\boldsymbol{a^{\prime}}_{\{p\}^{c}}}P\left(\boldsymbol{a}^{\prime}\mid\boldsymbol{x}^{\prime}\right)=\sum_{\boldsymbol{a^{\prime\prime}}_{\{p\}^{c}}}P\left(\boldsymbol{a}^{\prime\prime}\mid\boldsymbol{x}^{\prime\prime}\right)\label{eq:minimal}
\end{equation}
for any single party $p$. This is the minimum number of RC constraints,
which corresponds to the largest correlation polytope. Since always
the single party outcome probabilities are well defined, the signaling
in RC can only target  sets of parties with two or more elements, i.e. to a \emph{region}. In this article we only consider cases where signaling from a region is the union of the several individual signals from parties inside that region, accordingly we designate the signaling allowed by RC as  \emph{point to region} (PTR) signaling without any loss of generality. 

\section*{Appendix B: Communication complexity advantage in Relativistic Causal theories.}

\begin{figure}

 \includegraphics[scale=0.9]{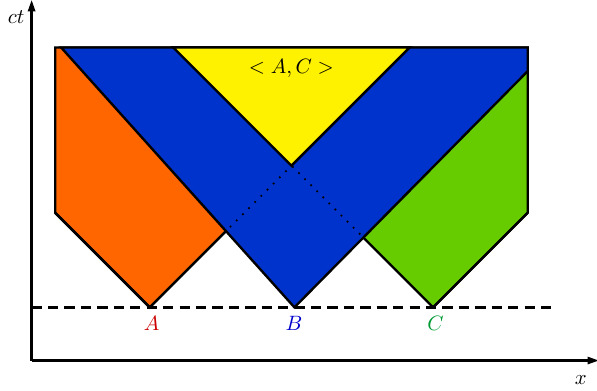}
 \caption{\label{Figure3} A particular spacetime configuration of measurement events in the three-party Bell experiment. The spacetime locations of Alice, Bob and Charlie measurement events are $A$, $B$, $C$ respectively. The yellow area shows $J^{+}\left(A\right)\cap J^{+}\left(C\right)$, the only region where the information from the correlations between the outputs of Alice and Charlie, denoted $<A,C>$,  is accessible. The crucial property of this measurement configuration is that $J^{+}\left(A\right)\cap J^{+}\left(C\right)\subset J^{+}\left(B\right) $.}
 \end{figure}

The relativistic causal correlations in the measurement configuration of  Fig.\ref{Figure3}  are separated from the usual no-signaling correlations by constraints of the form
\begin{eqnarray}
\sum_{b} P(a,b,c|x,y,z) - \sum_{b} P(a,b,c|x,y',z) = 0 \qquad \forall a,c,x,z,y \neq y'.
\end{eqnarray} 
The usual no-signaling constraints impose equality above while this equality is not necessary for relativistic causality to hold as shown in \cite{ref1}. The relaxation of these constraints is also reflected in a difference between the optimal success probability $\omega(G)$ of multi-player games in NS theories versus that in RC theories. We first note that as in the no-signaling case, the calculation of the optimal success probability of multi-player games in RC theories can be achieved in polynomial time by means of a linear program and second we explain how advantage in some of these games imply communication complexity advantages.

As a first example of the difference in $\omega(G)$ between NS and RC theories, consider the Guess-Your-Neighbour's-Input Game (GYNI) in the (3,2,2) Bell scenario. The inputs $x,y,z$ to the three parties in the game obey the promise $x \oplus y \oplus z = 0$ and the task is for each party to output their neighbour's input, so that the expression for the success probability in the game is given by
\begin{equation}
\omega(\text{GYNI})=\frac{1}{4}\left[ P\left(000|000\right)+P\left(110|011\right)+P\left(011|101\right)+P\left(101|110\right)\right] \label{eq:GYNIgame}
\end{equation}
It was shown in \cite{ref18} that $\omega_c(\text{GYNI}) = \omega_q(\text{GYNI}) = \frac{1}{4}$ while correlations obeying the no-signaling constraints allow $\omega_{ns}(\text{GYNI}) = \frac{1}{3}$. Here, $\omega_c, \omega_q$ and $\omega_{ns}$ denote the optimal success probability in classical, quantum and no-signaling theories respectively, while similarly $\omega_{rc}$ will denote the optimal success probability in theories that only impose relativistic causality. 
A simple maximization over the constraints in Eq.(\ref{eq:RCconstraint}) gives that $\omega_{rc}(\text{GYNI}) = \frac{1}{2}$ and this optimal value is achieved by the RC Box (Extremal box class nr. 77
in Appendix G):
\begin{equation}
B_{\text{GYNI}}^{RC}:\:P\left(abc\mid xyz\right)=\begin{cases}
\frac{1}{2}, & \textrm{if}\,\left(1\oplus b\oplus c\oplus y\right)\left(1\oplus a\oplus b\oplus x\right)=1\\
0, & \textrm{otherwise}
\end{cases}
\end{equation}  

As a second example, we present games where RC correlations allow the players to win with certainty (success probability one) while the best no-signaling strategy gives a success probability less than one. In these games, we consider three parties, of whom only the outputs of two parties appear in the winning constraint, while the third player helps the others achieve their task, so that one might term these games as "games with allies" (GWA). Specifically, we propose a GWA game for Alice and Charlie with Bob as the ally, with a winning constraint given by
\begin{equation}
xy\oplus yz=a\oplus c, \label{eq:gwagame}
\end{equation}
where as usual $x,y,z$ denote the inputs of the three players and $a,b,c$ denote their respective outputs. For this game, a simple maximization over the usual no-signaling constraints by a linear program shows that $\omega_{ns}(GWA) = \frac{3}{4}$. In fact, a classical strategy exists that achieves this value, and is simply given when Alice and Charlie output $a=c=0$ for any input $x,y,z$. When $y=0$, this strategy satisfies the winning constraint $a \oplus c = (x \oplus z) y = 0$, and when $y=1$, this strategy satisfies $a \oplus c = (x \oplus z)$ in exactly half of the cases, so that the optimal success probability $\omega_c(GWA) = \frac{3}{4}$ is achieved. On the other hand using a RC box is it possible to win the GWA with
certainty. Specifically, consider the RC Box (Extremal box class nr. 76 in Appendix
G): 
\begin{equation}
B_{\text{GWA}}^{RC}:\:P\left(abc\mid xyz\right)=\begin{cases}
\frac{1}{2}, & \textrm{if}\,\left(1\oplus a\oplus b\oplus xy\right)\left(1\oplus b\oplus c\oplus zy\right)=1\\
0, & \textrm{otherwise}
\end{cases}
\end{equation}
This box satisfies $a \oplus b = xy$ and $b \oplus c = zy$ (two Popescu-Rohrlich type boxes between A-B and B-C) so that it directly satisfies $a \oplus c = xy \oplus zy$, which gives $\omega_{rc}(\text{GWA}) = 1$. In the literature the condition (\ref{eq:gwagame}) appears in \cite{vanDam}
as a communication complexity task for Alice and Charlie: They must compute
functions $f\left(x,y,z\right)=h\left(x,y\right)\oplus g\left(y,z\right)$,
sharing 1 bit of information and without communication with Bob. This shows that RC Boxes can be used to trivialize some communication complexity tasks \cite{vanDam}. This remarkable result, suggest that a communication principle demanding the no-trivialization  of GWA games has direct consequences on RC correlations. Could it be that a communication principle   implies enough restrictions to certify a security protocol in RC theories?  We leave for future research the investigation of this question.


\section*{Appendix C: Lack of monogamy for two-player games in RC theories.}

An important consequence of the relaxation of the no-signaling constraints to those that are sufficient to ensure relativistic causality is the resulting lack of monogamy for general two-player games in RC theories. In particular, when the players' measurements are arranged in the space-time configuration of Fig.\ref{Figure1}, for any two-player game $G$ it holds that $\omega_{rc}(G^{AB}) = \omega_{rc}(G^{BC}) = \omega_{ns}(G)$. In other words, both players are able to achieve the maximum no-signaling (equal to the relativistic causal) value of the two-player game $G$ in this configuration. We give the proof of this statement for a general bipartite Bell inequality in this section. 

Consider a general bipartite Bell inequality $G$ of the form
\begin{eqnarray}
\label{eq:bip-Bell}
G := \sum_{a,b,x,y} \alpha_{a,b,x,y} P(a,b|x,y) \leq \omega_c(G),
\end{eqnarray}
where we take without loss of generality $\alpha_{a,b,x,y} \geq 0$ and normalize the inequality so that $\omega_c(G) \leq 1$. 
\begin{prop}[1]
	Consider any bipartite Bell inequality $G$ of the form in Eq.(\ref{eq:bip-Bell}). Suppose three players perform their measurements in the space-time configuration of Fig.\ref{Figure1}, and that both Alice-Bob and Bob-Charlie test for the violation of $G$. Then, there exist correlations $\{P(a,b,c|x,y,z)\}$ in RC theories that allow both A-B and B-C to achieve $\omega_{ns}(G)$.  
\end{prop}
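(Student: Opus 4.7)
The plan is to glue two copies of the same optimal bipartite box together through Bob, taking advantage of the fact that in the spacetime configuration of Fig.~\ref{Figure1} the only no-signaling constraint relaxed by the RC conditions is precisely the one forbidding Bob's input $y$ from influencing the joint behaviour of $(a,c)$. First I would fix an arbitrary bipartite NS distribution $P^{*}(a,b|x,y)$ attaining $\omega_{ns}(G)$ in the standard two-party scenario, with $a,x$ playing the ``first party'' role and $b,y$ the ``second party'' role (the role Bob takes in both $G^{AB}$ and $G^{BC}$). Because Bob occupies the same structural role in both pairs, the Bob marginal $P^{*}_{B}(b|y)=\sum_{a}P^{*}(a,b|x,y)=\sum_{c}P^{*}(c,b|z,y)$ is a single well-defined object; this is what removes the compatibility obstruction that a naive ``glue two unrelated optima'' strategy would otherwise face.

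Second, I would set
\begin{equation*}
P(a,b,c|x,y,z) \;:=\; \frac{P^{*}(a,b|x,y)\,P^{*}(c,b|z,y)}{P^{*}_{B}(b|y)},
\end{equation*}
adopting the convention $0/0=0$, which is consistent because whenever $P^{*}_{B}(b|y)=0$ both numerator factors vanish. Operationally, Bob samples $b$ from his marginal $P^{*}_{B}(\cdot|y)$ and then Alice and Charlie independently draw their outputs from the $P^{*}$-conditionals given $(b,y)$; the allowed dependence of the joint $(a,c)$ behaviour on $y$, without any corresponding dependence of the individual marginals of $a$ or $c$, is exactly the point-to-region signalling permitted by RC in this geometry.

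Third, I would verify the four RC constraints of Eq.~(\ref{eq:RC-constraints}) by direct summation. Summing over $a$ collapses the numerator through $\sum_{a}P^{*}(a,b|x,y)=P^{*}_{B}(b|y)$ and yields $P^{*}(c,b|z,y)$, independent of $x$; by symmetry $\sum_{c}P=P^{*}(a,b|x,y)$, independent of $z$. Summing further over $b$ then gives $\sum_{b}P^{*}(a,b|x,y)=P^{*}_{A}(a|x)$, which by no-signaling of $P^{*}$ depends on neither $y$ nor $z$; the $\sum_{a,b}$ case is symmetric and yields $P^{*}_{A}(c|z)$. Normalisation $\sum_{a,b,c}P=\sum_{b}P^{*}_{B}(b|y)=1$ is immediate. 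The bipartite marginals then read $P(a,b|x,y)=P^{*}(a,b|x,y)$ and $P(c,b|z,y)=P^{*}(c,b|z,y)$, so both pairs evaluate $G$ to $\omega_{ns}(G)$.

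The heart of the argument, and the only place where some thought is required, is the recognition that using a \emph{single} optimal bipartite distribution $P^{*}$ for both bipartite views automatically guarantees matching Bob marginals; a construction that started from two unrelated optima of $G$ would in general be obstructed by a mismatch between the respective Bob marginals. Beyond this observation I do not anticipate any real difficulty: the RC verification is routine algebra, and the only minor subtlety is the handling of zero denominators, which is rendered harmless by the support compatibility noted above.
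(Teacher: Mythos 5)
Your proof is correct and follows essentially the same route as the paper's: both constructions glue two copies of the optimal bipartite box through Bob so that the A--B and C--B marginals each reproduce $\omega_{ns}(G)$, while only the constraint forbidding $y$ from influencing the joint $(a,c)$ statistics --- precisely the one RC drops in this configuration --- is violated. Your explicit formula $P^{*}(a,b|x,y)P^{*}(c,b|z,y)/P^{*}_{B}(b|y)$ is a closed-form instance of the paper's symmetric extension, with Bob's output playing the role of the hidden variable $\lambda$ in the paper's decomposition of each fixed-$y$ local box.
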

\begin{proof}
	We construct the required RC box $\{P(a,b,c|x,y,z)\}$ depending on the bipartite Bell inequality $G$ as follows. Let $\{Q(a,b|x,y)\}$ be a two-party no-signaling box that achieves the maximum no-signaling (equal to relativistic causal, in this bipartite case) value $\omega_{ns}(G)$. 
	
	Fix $y=1$. The box $\{Q(a,b|x,y=1)\}$ is local realistic by virtue of the fact that party B only chooses the single input $y=1$. We construct a symmetric extension of $\{Q(a,b|x,y=1)\}$ to the three-party box $\{\tilde{Q}_1(a,b,c|x,y=1,z)\}$ such that the two-party marginals A-B and C-B are equal to $Q(a,b|x,y=1)$, i.e., we impose
	\begin{eqnarray}
	Q(a,b|x,y=1) = \sum_{c} \tilde{Q}_1(a,b,c|x,y=1,z) = \sum_{a'} \tilde{Q}_1(a',b,c'|x',y=1,z') \quad \forall b,a=c',x=z'.
	\end{eqnarray} 
	Such a symmetric extension can always be constructed for the local realistic box $\{Q(a,b|x,y=1)\}$. To make this more explicit, suppose that the box has the following decomposition into classical deterministic boxes
	\begin{eqnarray}
	Q(a,b|x,y=1) = \sum_{\lambda} p_{\lambda} Q_A(a|x,\lambda) Q_B(b|y=1,\lambda).
	\end{eqnarray} 
	One can then construct the symmetric extension $\{\tilde{Q}_1(a,b,c|x,y=1,z)\}$ as
	\begin{eqnarray}
	\tilde{Q}_1(a,b,c|x,y=1,z) = \sum_{\lambda} p_{\lambda} Q_A(a|x,\lambda) Q_B(b|y=1,\lambda) Q_A(c|z,\lambda),
	\end{eqnarray}
	where the marginal distribution for party C is the same as that for A, and $Q_A, Q_B$ are deterministic boxes. Note that the symmetric extension obeys all the usual no-signaling constraints i.e., every bipartite marginal $\tilde{Q}_1(a,b|x,y=1)$ and $\tilde{Q}_1(b,c|y=1,z)$ as well as the single-party marginals $\tilde{Q}_1(a|x), \tilde{Q}_1(b|y=1)$ and $\tilde{Q}_1(c|z)$ are well-defined independent of the inputs of the remaining parties. 
	
	Similarly, fix $y=2,3, \dots |Y|$ and construct the corresponding symmetric extensions $\tilde{Q}_k(a,b,c|x,y=k,z)$ for each of the local realistic boxes $Q(a,b|x,y=k)$. In all these boxes again, the bipartite and single-party marginals are well-defined independent of the inputs of the other parties, and moreover we have that
	\begin{eqnarray}
	\tilde{Q}_k(a|x) = \tilde{Q}_{k'}(a|x) = \sum_b Q(a,b|x,y=1)\quad \forall a,x,k,k' \nonumber \\
	\tilde{Q}_k(c|z) = \tilde{Q}_{k'}(c|z) = \sum_b Q(c,b|z,y=1) \quad \forall c,z,k,k'
	\end{eqnarray}
	by the property of the symmetric extension, i.e., A and C's marginals are the same in each extension. 
	
	Now, putting together all the symmetric extensions, we obtain the combined box $P(a,b,c|x,y,z)$ that is the required box shared by the three parties A,B and C, with $P(a,b,c|x,y=k,z) = \tilde{Q}_k(a,b,c|x,y=k,z)$ for every $k,a,b,c,x,z$. This box satisfies all the RC constraints in Eq.(\ref{eq:RCconstraint}) by the argument above. Note that in general, 
	\begin{eqnarray}
	\sum_b P(a,b,c|x,y=k,z) \neq \sum_b P(a,b,c|x,y=k',z) \quad k \neq k',
	\end{eqnarray}
	but we have seen that this is precisely the missing constraints from the usual no-signaling conditions, that is not necessary to ensure by causality in this measurement configuration. Since the two-party marginals $P(a,b|x,y)$ and $P(c,b|z,y)$ are both equal to $Q(a,b|x,y)$, we have that both A-B and B-C achieve the maximum no-signaling value $\omega_{ns}(G)$. This completes the proof.   
\end{proof}	
As an example of the general proposition above, we find that the following RC box 
\begin{equation}
B_{G_{u}}^{RC}:\:P\left(abc\mid xyz\right)=\begin{cases}
\frac{1}{d}, & \textrm{if}\,a=\pi_{xy}\left(b\right),\,c=\pi_{zy}\left(b\right)\\
0, & \textrm{otherwise}
\end{cases}\label{eq:MonBreak}
\end{equation}
allows both A-B and B-C to achieve the maximum no-signaling value of 1, for any unique game $G_{u}$ defined by a set of permutations $\{\pi_{xy}\}$. 

\section*{Appendix D: The No-go theorem for device-independent security in relativistic causal theories}

In this appendix we complete the proof of the No-go theorem presented in our article. The main theorem is:

\begin{theorem}\label{thm:eves}
	  Assume that $N$ reliable agents perform
their measurements with  inputs $\mathbf{q}=y_{1},\ldots y_{N}$ and outputs $\mathbf{r}=b_{1},\ldots,b_{N}$ in arbitrary spacelike separated positions, to compute any multipartite Bell inequality $G$ of
the form:\begin{equation}
G\,:=\sum_{\mathbf{r},\mathbf{q}}V(\mathbf{r},\mathbf{q})P\left(\mathbf{r}\mid\mathbf{q}\right)\leq\omega_{c}\left(G\right)
\end{equation} in which a violation
of  $G$,  $\omega^{*}>\omega_{c}$
is observed. Then, there is a space-time configuration for two
 eavesdroppers' measurements, so that $N$ correlations of those
measurements will reproduce the statistics of the $N$ reliable agents
after the eavesdroppers meet or communicate their results. 
\end{theorem}

\begin{proof}
We begin with a brief description of the idea of the proof. We consider two eavesdroppers $E_1$ and $E_2$. We show that satisfying RC constraints one can construct a device such that the inputs and outputs of the honest parties' devices are encoded into correlations between $E_1$ and $E_2$. In order to avoid signaling, the local
marginals of the eavesdroppers are uniform,
as one of the Eaves in a sense one-time-pads the information of the other. We borrow this idea
from the simplest secrete sharing scheme. To give a concrete example, the outputs of the honest parties $\mathbf{r}=b_1,\ldots,b_N$ will be encoded into variables of Eves $\mathbf{d}$ and $\mathbf{c}$ respectively as follows. For each of $j \in \{1,\ldots,N\}$ there is:  
$d_j =c_j\oplus_{H_j} b_j$ where addition is 
modulo $H_j$ - the dimension of $b_j$, and distribution of $c_j$ is $\frac{1}{H_j}$. It is easy to
see that none of the eavesdroppers can gain
any knowledge about each of $b_j$, however
upon meeting they can learn each of $b_j$
perfectly.

We are ready to proceed with details of the proof.
 Consider two eavesdroppers $E_{1},E_{2}$  with devices that have only outputs $\mathbf{s}_{1}=c_{1},\ldots,c_{N},z_{1},\ldots z_{N}$
, $\mathbf{s}_{2}=d_{1},\ldots,d_{N},w_{1},\ldots w_{N}$ respectively.
Let's say the eavesdroppers want to attack all trusted parties $B_{1},\ldots,B_{N}$.
Given a particular reference system $S\left(\vec{r},t\right)$ there
always exist an event $p_{B}$ in a causal space-time, such that the
space-time convex hull $\mathcal{B}$ of the spacelike separated measurement
events $p_{B_{1}},\ldots,p_{B_{N}}$ performed by the $B_{1},\ldots,B_{N}$
parties is completely inside its causal past $J^{-}\left(p_{B}\right)\supseteq\mathcal{B}$.
When the two eavesdroppers $E_{1},E_{2}$ can choose any spacelike
separated positions for their measurement events $p_{E_{1}},p_{E_{2}}$,
in particular they can satisfy $J^{+}\left(p_{E_{1}}\right)\cap J^{+}\left(p_{E_{2}}\right)\subseteq J^{+}\left(p_{B}\right)$
for any space-time which is causal and simply connected. In this case
every $B_{1},\ldots,B_{N}$ can signal to any correlation between
the outputs of $E_{1},E_{2}$.

Then, the eavesdroppers could distribute a behavior that satisfies:
\begin{equation}
\sum_{\mathbf{r}}\tilde{Q}_{\mathbf{k}}\left(\mathbf{r},\mathbf{s}_{1},\mathbf{s}_{2}\mid\mathbf{q}=\mathbf{k}\right)\neq\sum_{\mathbf{r}}\tilde{Q}_{\mathbf{k}^{\prime}}\left(\mathbf{r},\mathbf{s}_{1},\mathbf{s}_{2}\mid\mathbf{q}=\mathbf{k}^{\prime}\right)\:\mathbf{k}\neq\mathbf{k}^{\prime}
\end{equation}
The correlation between the outputs of $E_{1},E_{2}$ can be choosen,
such that $d_{j}=c_{j}\bigoplus_{H_{j}}b_{j}$ and $w_{j}=z_{j}\bigoplus_{L_{j}}y_{j}$,
with $H_{j}$ the dimension of outputs $b_{j}$ (also outputs $c_{j},d_{j}$
are choosen to have dimension $H_{j}$), $L_{j}$ the dimension of
inputs $y_{j}$ (also outputs $z_{j},w_{j}$ are choosen to have dimension
$L_{j}$) and $\bigoplus_{H_{j}},\bigoplus_{L_{j}}$ are sums mod
$H_{j}$ and mod $L_{j}$, respectively.

Now, we should check that no-signaling conditions are satisfied according
to the scenario. First, because the $E_{1},E_{2}$ have no input,
they can not signal to the $B_{1},\ldots,B_{N}$. Second, the $\tilde{Q}_{\mathbf{k}}\left(\mathbf{r},\mathbf{s}_{1},\mathbf{s}_{2}\mid\mathbf{q}=\mathbf{k}\right)$
is a classical distribution because it has a single input $\mathbf{k}$
and in consequence we can choose: 
\begin{eqnarray}
\tilde{Q}_{\mathbf{k}}\left(\mathbf{r},\mathbf{s}_{1},\mathbf{s}_{2}\mid\mathbf{q}=\mathbf{k}\right) & = & \sum_{\lambda}p_{\lambda}Q_{B}\left(\mathbf{r}\mid\mathbf{k},\lambda\left(\mathbf{r},\mathbf{k}\right)\right)Q_{E}\left(\mathbf{s}_{1},\mathbf{s}_{2}\mid\mathbf{k},\lambda\left(\mathbf{r},\mathbf{k}\right)\right)\:\forall\mathbf{k}\label{eq:split}
\end{eqnarray}
where $Q_{B}\left(\mathbf{r}\mid\mathbf{k},\lambda\left(\mathbf{r},\mathbf{k}\right)\right)$
reproduce the marginals of $Q\left(\mathbf{r}\mid\mathbf{q}=\mathbf{k}\right)$
for each particular $\mathbf{k}$ when $Q\left(\mathbf{r}\mid\mathbf{q}\right)$
is the no-signaling box that achieves the value $\omega^{\prime}(G)>\omega_{c}\left(G\right)$
expected by the parties $B_{1},\ldots,B_{N}$ for the Bell inequality
$G$. Because $Q\left(\mathbf{r}\mid\mathbf{q}\right)$ is no-signaling,
then no $B_{i}$ signals to any $B_{j}$. Now, what is left is to
check that $B_{1},\ldots,B_{N}$ do not signal neither to $E_{2}$
nor to $E_{1}$. That is: 
\begin{eqnarray}
\sum_{\mathbf{r},\mathbf{s}_{2}}\tilde{Q}_{\mathbf{k}}\left(\mathbf{r},\mathbf{s}_{1},\mathbf{s}_{2}\mid\mathbf{q}=\mathbf{k}\right) & = & \sum_{\mathbf{r},\mathbf{s}_{2}}\tilde{Q}_{\mathbf{k}^{\prime}}\left(\mathbf{r},\mathbf{s}_{1},\mathbf{s}_{2}\mid\mathbf{q}=\mathbf{k}^{\prime}\right)\:\forall\mathbf{k}\neq\mathbf{k}^{\prime}\nonumber \\
\sum_{\mathbf{r},\mathbf{s}_{1}}\tilde{Q}_{\mathbf{k}}\left(\mathbf{r},\mathbf{s}_{1},\mathbf{s}_{2}\mid\mathbf{q}=\mathbf{k}\right) & = & \sum_{\mathbf{r},\mathbf{s}_{1}}\tilde{Q}_{\mathbf{k}^{\prime}}\left(\mathbf{r},\mathbf{s}_{1},\mathbf{s}_{2}\mid\mathbf{q}=\mathbf{k}^{\prime}\right)\:\forall\mathbf{k}\neq\mathbf{k}^{\prime}\label{eq:cond marginal}
\end{eqnarray}
At this point we remark that $\lambda$ carries on the information
from $\mathbf{r}=b_{1},\ldots,b_{N}$ which determine the correlation
of outputs $c_{j},d_{j}$. Now, since $d_{j}=c_{j}\bigoplus_{H_{j}}b_{j}$
and $w_{j}=z_{j}\bigoplus_{L_{j}}y_{j}$ the outputs $c_{j},d_{j},z_{j},w_{j}$
depend only on the inputs and outputs $y_{j},b_{j}$ of agent $B_{j}$.
Because of the functional dependencies above we can rewrite $Q_{E}\left(\mathbf{s}_{1},\mathbf{s}_{2}\mid\mathbf{k},\lambda\left(\mathbf{r},\mathbf{k}\right)\right)$
as: 
\begin{equation}
Q_{E}\left(\mathbf{s}_{1},\mathbf{s}_{2}\mid\mathbf{k},\lambda\left(\mathbf{r},\mathbf{k}\right)\right)=\prod_{j}Q_{E}^{\left(j\right)}\left(c_{j},d_{j},z_{j},w_{j}\mid\lambda\left(b_{j},y_{j}\right),y_{j}\right)
\end{equation}
Here a valid choice for each $Q_{E}^{\left(j\right)}$ is:
\begin{equation}
Q_{E}^{\left(j\right)}\left(c_{j},d_{j},z_{j},w_{j}\mid\lambda\left(b_{j},y_{j}\right),y_{j}\right)=\begin{cases}
\frac{1}{H_{j}L_{j}}, & d_{j}=c_{j}\bigoplus_{H_{j}}b_{j}\textrm{ and }w_{j}=z_{j}\bigoplus_{L_{j}}y_{j}\\
0, & \textrm{otherwise}
\end{cases}
\end{equation}
If we consider the behavior of the form (\ref{eq:split}) to obtain the
marginal of $E_{2}$ :
\begin{eqnarray*}
\sum_{\mathbf{r},\mathbf{s}_{1}}\tilde{Q}_{\mathbf{k}}\left(\mathbf{r},\mathbf{s}_{1},\mathbf{s}_{2}\mid\mathbf{q}=\mathbf{k}\right) & = & \sum_{\mathbf{r},\mathbf{s}_{1}}\sum_{\lambda}p_{\lambda}Q_{B}\left(\mathbf{r}\mid\mathbf{k},\lambda\left(\mathbf{r},\mathbf{k}\right)\right)Q_{E}\left(\mathbf{s}_{1},\mathbf{s}_{2}\mid\mathbf{k},\lambda\left(\mathbf{r},\mathbf{k}\right)\right)\\
 & = & \sum_{\mathbf{r}}\sum_{\lambda}p_{\lambda}Q_{B}\left(\mathbf{r}\mid\mathbf{k},\lambda\left(\mathbf{r},\mathbf{k}\right)\right)\sum_{\mathbf{s}_{1}}Q_{E}\left(\mathbf{s}_{1},\mathbf{s}_{2}\mid\mathbf{k},\lambda\left(\mathbf{r},\mathbf{k}\right)\right)\\
 & = & \sum_{\mathbf{r}}\sum_{\lambda}p_{\lambda}Q_{B}\left(\mathbf{r}\mid\mathbf{k},\lambda\left(\mathbf{r},\mathbf{k}\right)\right)\sum_{c_{N},z_{N}}\cdots\sum_{c_{1},z_{1}}Q_{E}\left(\mathbf{s}_{1},\mathbf{s}_{2}\mid\mathbf{k},\lambda\left(\mathbf{r},\mathbf{k}\right)\right)
\end{eqnarray*}
But, if we sum the distributions $Q_{E}\left(\mathbf{s}_{1},\mathbf{s}_{2}\mid\mathbf{k},\lambda\left(\mathbf{r},\mathbf{k}\right)\right)$
over the components $\left(c_{i},z_{i}\right)$ of $\mathbf{s}_{1}$
we obtain: 
\begin{eqnarray*}
 &  & \sum_{c_{i},z_{i}}Q_{E}\left(\mathbf{s}_{1},\mathbf{s}_{2}\mid\mathbf{k},\lambda\left(\mathbf{r},\mathbf{k}\right)\right)=\sum_{c_{i},z_{i}}\prod_{j}Q_{E}^{\left(j\right)}\left(c_{j},d_{j},z_{j},w_{j}\mid\lambda\left(b_{j},y_{j}\right),y_{j}\right)\\
 & = & \prod_{j\neq i}Q_{E}^{\left(j\right)}\left(c_{j},d_{j},z_{j},w_{j}\mid\lambda\left(b_{j},y_{j}\right),y_{j}\right)\sum_{c_{i},z_{i}}Q_{E}^{\left(i\right)}\left(c_{i},d_{i},z_{i},w_{i}\mid\lambda\left(b_{j},y_{j}\right),y_{j}\right)\\
 & = & \prod_{j\neq i}Q_{E}^{\left(j\right)}\left(c_{j},d_{j},z_{j},w_{j}\mid\lambda\left(b_{j},y_{j}\right),y_{j}\right)\sum_{c_{i},z_{i}}Q_{E}^{\left(i\right)}\left(d_{i}=c_{i}\bigoplus_{H_{i}}b_{i},w_{i}=z_{i}\bigoplus_{L_{i}}y_{i}\mid\lambda\left(b_{j},y_{j}\right),y_{j}\right)\\
 & = & \prod_{j\neq i}Q_{E}^{\left(j\right)}\left(c_{j},d_{j},z_{j},w_{j}\mid\lambda\left(b_{j},y_{j}\right),y_{j}\right)\left(\frac{1}{H_{i}L_{i}}\right)
\end{eqnarray*}
where in the last step we use the fact that the permutations $\pi_{b_{j}}\left(\cdot\right)=\left(\cdot\right)\bigoplus_{H_{j}}b_{j}$
and $\pi_{y_{j}}\left(\cdot\right)=\left(\cdot\right)\bigoplus_{L_{j}}y_{j}$
have a unique value. Since the above calculation is equally valid
when summing up over every pair $\left(c_{i},z_{i}\right)$ of $\mathbf{s}_{1}$
we have:
\begin{eqnarray*}
\sum_{\mathbf{r},\mathbf{s}_{1}}\tilde{Q}_{\mathbf{k}}\left(\mathbf{r},\mathbf{s}_{1},\mathbf{s}_{2}\mid\mathbf{q}=\mathbf{k}\right) & = & \sum_{\mathbf{r}}\sum_{\lambda}p_{\lambda}Q_{B}\left(\mathbf{r}\mid\mathbf{k},\lambda\left(\mathbf{r},\mathbf{k}\right)\right)\sum_{c_{N},z_{N}}\cdots\sum_{c_{1},z_{1}}Q_{E}\left(\mathbf{s}_{1},\mathbf{s}_{2}\mid\mathbf{k},\lambda\left(\mathbf{r},\mathbf{k}\right)\right)\\
 & = & \sum_{\mathbf{r}}\sum_{\lambda}p_{\lambda}Q_{B}\left(\mathbf{r}\mid\mathbf{k},\lambda\left(\mathbf{r},\mathbf{k}\right)\right)\left(\frac{1}{H_{1}L_{1}}\right)\sum_{c_{N},z_{N}}\cdots\sum_{c_{2},z_{2}}\prod_{j\neq1}Q_{E}^{\left(j\right)}\left(c_{j},d_{j},z_{j},w_{j}\mid\lambda\left(b_{j},y_{j}\right),y_{j}\right)\\
 & \vdots\\
 & = & \sum_{\mathbf{r}}\sum_{\lambda}p_{\lambda}Q_{B}\left(\mathbf{r}\mid\mathbf{k},\lambda\left(\mathbf{r},\mathbf{k}\right)\right)\left(\prod_{j=1}^{N}\frac{1}{H_{j}L_{j}}\right)\\
 & = & \left(\prod_{j=1}^{N}\frac{1}{H_{j}L_{j}}\right)\sum_{\mathbf{r}}\sum_{\lambda}p_{\lambda}Q_{B}\left(\mathbf{r}\mid\mathbf{k},\lambda\left(\mathbf{r},\mathbf{k}\right)\right)\\
 & = & \prod_{j=1}^{N}\frac{1}{H_{j}L_{j}}
\end{eqnarray*}
Hence, the marginal of $E_{2}$ is:
\begin{equation}
\sum_{\mathbf{r},\mathbf{s}_{1}}\tilde{Q}_{\mathbf{k}}\left(\mathbf{r},\mathbf{s}_{1},\mathbf{s}_{2}\mid\mathbf{q}=\mathbf{k}\right)=\prod_{j=1}^{N}\frac{1}{H_{j}L_{j}}\:\forall\mathbf{k}
\end{equation}
Now, since the permutations $\pi_{b_{j}}\left(\cdot\right),\pi_{y_{j}}\left(\cdot\right)$
have unique inverses $\pi_{b_{j}}^{-1}\left(\cdot\right),\pi_{y_{j}}^{-1}\left(\cdot\right)$
respectively, we can apply the same arguments when summing up with
every pair $\left(d_{i},w_{i}\right)$ of $\mathbf{s}_{2}$. Then,
a direct calculation shows that the marginal of $E_{1}$ is: 
\begin{equation}
\sum_{\mathbf{r},\mathbf{s}_{2}}\tilde{Q}_{\mathbf{k}}\left(\mathbf{r},\mathbf{s}_{1},\mathbf{s}_{2}\mid\mathbf{q}=\mathbf{k}\right)=\prod_{j=1}^{N}\frac{1}{H_{j}L_{j}}\:\forall\mathbf{k}
\end{equation}
This demonstrates that the marginals of $E_{1}$and $E_{2}$ are independent
from the inputs of $B_{j}$ for each $j\in\{1,\ldots,N\}$.

To complete the attack, we specify how the eavesdroppers can extract
the information of $Q\left(\mathbf{r}\mid\mathbf{q}\right)$ from
$\mathbf{s}_{1},\mathbf{s}_{2}$. As we have seen the value of $Q_{E}\left(\mathbf{s}_{1},\mathbf{s}_{2}\mid\mathbf{q}=\mathbf{k},\lambda\left(\mathbf{r},\mathbf{k}\right)\right)$
is non zero only when $d_{j}=c_{j}\bigoplus_{H_{j}}b_{j}$ and $w_{j}=z_{j}\bigoplus_{L_{j}}y_{j}$
for every $j\in\left\{ 1,...,N\right\} $. Then, from the table of
values $\mathbf{s}_{1},\mathbf{s}_{2}$ is possible to compute a table
$\mathbf{r},\mathbf{q}$ and determine a distribution $Q_{E}\left(\mathbf{r},\mathbf{q}\right)$.
From here we compute:
\begin{equation}
Q\left(\mathbf{r}\mid\mathbf{q}\right)=\frac{Q_{E}\left(\mathbf{r},\mathbf{q}\right)}{\sum_{\mathbf{r}}Q_{E}\left(\mathbf{r},\mathbf{q}\right)}
\end{equation}
Finally, the eavesdroppers are able to compute $Q\left(\mathbf{r}\mid\mathbf{q}\right)$
without affecting the violation $\omega^{\prime}(G)>\omega_{c}\left(G\right)$
observed by parties $B_{j}$. 
\end{proof}
We remark that such attack is possible
because behaviors $\tilde{Q}_{\mathbf{k}}\left(\mathbf{r},\mathbf{s}_{1},\mathbf{s}_{2}\mid\mathbf{q}=\mathbf{k}\right)$
are allowed by the relativistic causal constraints.

\section*{Appendix D: No secure key distillation via direct measurement and LOPC operations, against RC adversaries }

In the previous Section, we have shown that two collaborating eavesdroppers can learn a copy of correlations shared by two honest parties. Intuitive as it is, in such a case, no cryptographic protocol based on these correlations could be accomplished.
However, cryptography is a domain which studies a plethora of security scenarios. Proving a no-go result for each
of them is a difficult task, as the proof is highly dependent on the mathematical description of a particular scenario
(such as two-party cryptographic protocols, secret sharing, anonymous voting, public-key cryptographic protocols,
or private randomness generation). The most fundamental among those scenarios is, no doubt, the secure key distribution between two honest parties against an adversary. To exemplify that it may not be possible in RC, we prove in detail that a broad class of protocols yield zero key rate in the latter scenario.
These are protocols that obtain key via the same measurement in each run of the protocol. They are called {\it Measured device followed by Local Operations and Public Communications} (MDLOPC). Notably,
all modern protocols in device-independent cryptography and quantum device-independent cryptography are MDLOPC operations (see \cite{Bell-nonlocality}  and \cite{NSDI} and references therein). Moreover, in the scenario of secure key distribution against the non-signaling adversary, it is believed that more general class can not yield positive key \cite{Rotem-Sha,Salwey-Wolf}. This fact justifies our focus on MDLOPC operations that lead to positive key in the case of non-signaling adversary \cite{masanes-2009-102,hanggi-2009,Hanggi-phd}. As we will see, no such protocol can achieve a positive key rate against the relativistic causal ones.
Since we will base on the results of Theorem \ref{thm:eves}, we will consider two collaborating adversaries (eavesdroppers) rather than a single one.

\subsection{Scenario for secure key distribution against the relativistic causal adversary.}

In the scenario of secure key distribution against relativistic causal (RC) adversaries, the $M$ 
honest parties share $N$ copies of a (single-use) device. The $M$ parties first measure each of $N$ devices   $P(ABE_1E_2|Y_1,...,Y_M,Z_1,Z_2)$ with {\it the same} direct $\mathbf{q}:=(y_1^1,\ldots,y_M^1)$. They further apply an LOPC (Local Operations and Public Communications) operation on outputs $\mathbf{r}:=(b_1,\ldots,b_M)$ of the measurement. 
This class of operations (introduced in \cite{NSDI}) is called MDLOPC (Measurement on Device followed by LOPC operations).

In practical protocols, there are two phases: testing and key generation.
The measurements in the protocol are taken randomly for both tests and key generation.
There is a finite set of test measurements, while there is a single measurement for key generation (here
$(y^1_i)_{i=1}^M$). The testing rounds are necessary for checking the value of Bell inequality.
If this value is high enough, the data from key generation rounds are processed to produce key. The whole protocol
is aborted otherwise. In what follows, we assume that the device has passed the test, which means that the tested Bell violation is high enough (or even maximal possible). This fact ensures that in the non-signaling case, Alice and Bob would be able in principle to produce key by post-processing (information reconciliation and privacy amplification).
For the sake of clarity, we will
present the proof for $M=2$ honest parties and
later show how to generalize the result for an arbitrary number of them based Ref. \cite{multiparty-squashed}. Consequently, instead of inputs $Y_1,Y_2$ and outputs $B_1,B_2$ we will write $X,Y$ and $A,B$ respectively and use lower case for the values of random variables (e.g. $X=x,Y=y$).
In what follows, the attack by Eves will be
chosen such that $Z_1,Z_2$ will be both unary,
and hence omitted in notation in most cases.

We are ready to define the protocol of key distillation for the case of the two honest parties $A$ and $B$.

\begin{dfn}
	A protocol of key distillation is a sequence of MDLOPC operations $\Lambda=\left\{\Lambda_N\right\}$, performed by the honest parties, each element of which  consisting of a measurement stage $\{{\cal M}=(y^1_i)_{i=1}^M\}$ with $y^1_i = y^1$, followed by a post-processing $\{{\cal P}_N\}$. Moreover, for each consecutive $N$ copies of shared devices $P\equiv P(A,B,E|XYZ)^{\otimes N}$, it outputs a conditional probability distribution such that:
	\begin{eqnarray}
	\label{pdit}
	{||\Lambda_N \left(P^{\otimes N}\right)- \hat{P}_\mathrm{ideal}^{(d_N)} ||}_\mathrm{RC} \le \varepsilon_N \stackrel{N\rightarrow \infty}{\longrightarrow} 0,
	\label{eq:sec-cond}
	\end{eqnarray}
	where an {\it ideal distribution} $\hat{P}_\mathrm{ideal}^{(d_N)}$ is perfectly correlated between the honest parties,
	and product with the device of the eavesdroppers: \begin{equation}
	\hat{P}^{(d_N)}_{\mathrm{ideal}}(A=a,B=b,E_1,E_2|Z_1,Z_2) = \left(\frac{\delta_{A=a,B=b}}{ d_N}\right)\sum_{a,b}P(A=a,B=b,E_1,E_2|Z_1,Z_2),
	\end{equation}
	with $P(A=a,B=b,E_1,E_2|Z_1,Z_2)\equiv \Lambda_N\left(P^{\otimes N}\right)$.
	Moreover by $||P - Q||_{RC} := \sup_{\theta \in RC} ||\theta(P)- \theta(Q)||_1$, we mean the supremum of distinguishability between the distributions achievable by the linear operations satisfying relativistic causality, and $d_N= \mathrm{dim A}^N$.
\end{dfn}

Knowing what the protocols of key distillation in the considered scenario are, we can pass to define the quantity of the key secure against RC adversaries. We limit here ourselves to the case of the key distilled by MDLOPC protocols. 

\begin{dfn} (Key secure against RC adversary)\label{def:key_rate}
	Given a tripartite device $P\equiv P(ABE|XYZ)$ the secret key rate of the protocol of key distillation  $\{\Lambda_N\}$, on $N$ iid copies of the device, denoted by $\mathcal{R}\left(\left.\Lambda\right|_P\right)$ is a number $\limsup_{N\rightarrow \infty} \frac{\log d_N}{
		N}
	$, where $\log d_N$ is the length of a secret key shared between Alice and Bob, with $d_N=\mathrm{dim}_\mathrm{A}\left(\Lambda_N \left(P^{\otimes N}\right)\right)$. The  rate of device independent key secure against RC adversary in the {\it iid} scenario is given by
	\begin{equation}
	K_{DI}^{RC}(P)=\sup_{\Lambda_N \in MDLOPC} {\cal R}\left(\left.\Lambda\right|_P\right),
	\end{equation}
	where the supremum is taken with respect to MDLOPC  protocols.
\end{dfn}

\subsection{No-go for MDLOPC protocols}
To show that the key rate obtained by MDLOPC operations secure against RC adversaries is zero, we demonstrate an upper bound on the key rate and show that it is zero. 
We achieve this task by relating the introduced scenario of security against relativistic causal adversaries with
the so-called {\it secure key agreement} (SKA) \cite{Maurer93,renner-wolf-gap}.

Since we are going to refer to SKA, we recall it briefly here. There, the honest parties and
an eavesdropper share (asymptotically growing number) $N$ copies of a joined probability distribution $P(A,B E)$. The parties
can perform an LOPC operations. The eavesdropper collects the public communication during the protocol. Original security condition that is
demanded for an output of a key distillation protocol is rather involved \cite{Maurer93}. It has been
however shown in \cite{NSDI} that a simple lower bound holds:

\begin{theorem}[\cite{NSDI}]\label{thm:SKA_bound} The secret key rate ${S} ( A : B||E)$ of SKA cryptographic model \cite{CsisarKorner_key_agreement,Maurer93} 
	{is lower bounded by the following asymptotic expression}
	\label{cor:equvialence}
	\begin{equation}\label{eq:equality}
	\mathrm{S} ( A : B||E)_{P(ABE)} \ge
	\sup_{\cal P} \limsup_{N\rightarrow \infty} \frac{\log \mathrm{dim}_\mathrm{A} \left( \mathcal{P}_N\left({P}^{\otimes N}\left(ABE\right)\right)\right)}{N},
	\end{equation}
	{with security condition}
	\begin{equation}
	||{\mathcal{P}_N\left({P}^{\otimes N}\left(ABE\right)\right) - P_\mathrm{ideal}^{(d_N)}}||_1 \le \delta_N \stackrel{N\to \infty}{\longrightarrow}0,
	\label{eqn:normAP}
	\end{equation}
	where $\mathcal{P}=\cup_{N=1}^\infty \{{\mathcal{P}}_N\}$ is a cryptographic protocols consisting of LOPC operations, acting on $N$ iid copies of the classical probability distribution $P(ABE)$. Moreover $P_\mathrm{ideal}^{(d_N)} = \frac{\delta_{A=a,B=b}}{d_N}P(E)$, and $P(E)=\sum_{a,b}P(A=a,B=b,E)$.
\end{theorem}

Let us describe the idea of the proof of the no-go briefly. 
We consider a family of tripartite devices with unary input on the eavesdropper's part (hence omitted in notation) that realize the attack described in Theorem \ref{thm:eves}. For a fixed number of copies $N$, it reads  $P(ABE_1E_2|XY)^{\otimes N}$. 
Since the honest parties first measure their device,  the figure of merit is, in fact, a joined probability distribution $P(ABE_1E_2|X=x,Y=y)^{\otimes N}$. In this case, the norm $||.||_{RC}$ of the difference of two conditional distributions in Eq. (\ref{eq:sec-cond}), is equal to the variational distance between two distributions. Hence, the key secret against the Eves under this particular strategy turns to be upper bounded by the key obtained from $P(AB\bar{E})^{\otimes N}$ by LOPC operations, where ${\bar E}=(AB)$. Indeed from Theorem \ref{thm:eves},
the two Eves can upon meeting
learn the realization of the marginal $P(AB|X=x,Y=y)$. In this way, the Eves switch from the RC scenario to  the secrete key agreement scenario. In the latter scenario, there is a well known bound
on the secure key $S(A:B||\bar{E})$, 
called {\it intrinsic information}. The intrinsic information of a distribution 
$P(ABE)$ is $I(A:B\downarrow E)_{P(ABE)}:=\inf_{\Lambda_E: E\rightarrow E'} I(A:B| E')_{P(ABE')}$. Here $I(A:B|E')_{P(ABE')}$ is the {\it conditional mutual information} equal to $H(AE')+H(BE')-H(E')-H(ABE')$ with $H(X)$ denoting a Shannon entropy of the random variable $X$, and the infimum is taken over stochastic maps transforming $E$ in to $E'$. We have then
\begin{theorem}[\cite{MaurerWolf00CK}] For any tripartite distribution $P(ABE)$, there is:
	\begin{equation}
	S(A:B||E)_{P(ABE)}\leq I(A:B\downarrow E)_{P(ABE)}.
	\end{equation}
	\label{thm:Maurer-Wolf}
\end{theorem}
We are ready now to state the main result of this section - a no go for distillation via MDLOPC operations.

\begin{theorem}[No-go for MDLOPC secure key distribution] For any $P_{AB}\equiv P(A,B|X,Y)$ satisfying non-signaling constraints, there exists a space-time configuration of two eavesdroppers 
	$E_1$ and $E_2$ and a tripartite distribution $P(A,B,E_1,E_2|X,Y)$ satisfying RC constraints, with marginal distribution on $AB$ equal to $P_{AB}$ such that:
	\begin{equation}
	K_D^{RC}(P(ABE_1E_2|XY)) = 0.
	\end{equation}
	\label{thm:no-go-for-RCkey}
\end{theorem}

\begin{proof}
	
	Let us fix  $\eta>0$. For this $\eta$ there exist natural $N$,  $\epsilon_N>0$ and the operation of the MDLOPC protocol, which is  $\eta$-optimal. We denote this operation as ${\Lambda}_N:={\cal P}_N\circ{\cal M}_N$. The first part ${\cal M}_N$ is equivalent to an $N$-fold measurement $x^1,y^1$ (the same on each of the copy of $P(ABE_1E_2|X,Y)$). By $\eta$-optimality we mean that the rate of protocol $\{\Lambda_N\}$ is close by $\eta$ to the optimal $K_D^{RC}(P(ABE_1E_2|XY))$:
	\begin{equation}
	({1/N})\log \mathrm{dim}_A({\Lambda}_N(P^{\otimes N}(ABE_1E_2|X,Y))) \geq K_D^{RC}(P(ABE_1E_2|X,Y)) - \eta,
	\label{eq:eta-optimal}
	\end{equation}
	and
	\begin{equation}
	||{\Lambda}_N(P^{\otimes N}(ABE_1E_2|XY)) - \hat{P}_{\mathrm{ideal}}^{(d_N)}||_{RC} \leq \epsilon_N.
	\end{equation}
	
	Now, thanks to Theorem \ref{thm:eves} the device $P(ABE_1E_2|XY)$ can be chosen such,  that the two Eves, upon meeting are able to learn a copy of a realization of each copy of the distribution $P(A,B|X=x^1,Y=y^1)$. Let us note here, that the Eves can learn not only the outputs $AB$, but also the inputs $X,Y$. However in the class of MDLOPC protocols the measurement $(x^1,y^1)$ that attains supremum in definition of $K_D^{RC}$, is known to Eve(s). This is because  the protocol, as it is usually assumed, is publicly known in particular to adversary. We focus then, on the fact that Eves learn the outputs, so that  $\Omega_E((E_1E_2)^{\otimes N}) = (AB)^{\otimes N}$.
	Since $\Omega_E$ is (in principle unnecessary) action of Eves, the key can only be higher after performing $\Omega_E$:
	\begin{eqnarray}
	\frac{\log \mathrm{dim}_A({\Lambda}_N(P^{(N)}(ABE_1E_2|XY)))}{N}=
	\frac{\log \mathrm{dim}_A({\cal P}_N(P^{(N)}(ABE_1E_2|X=x^1,Y=y^1)))}{N}\leq \nonumber\\ \frac{\log \mathrm{dim}_A({\cal P}_N(P(AB(AB))^{\otimes N}))}{N},
	\label{eq:eta-close}
	\end{eqnarray}
	where $(\frac{1}{N}){\log \mathrm{dim}_A({\cal P}_N(P(AB(AB))^{\otimes N}))}$ is the key rate of the LOPC protocol ${\cal P}_N$ when acting on $P(AB(AB))^{\otimes N}$.
	We have also: 
	\begin{equation}
	||{\Lambda}_N(P(ABE_1E_2|XY)^{\otimes N}) - \hat{P}_{\mathrm{ideal}}^{(d_N)}||_{RC} \leq \epsilon_N \implies
	||{\cal P}_N(P(AB(AB))^{\otimes N}) - P_{\mathrm{ideal}}^{(d_N)}||_{1} \leq \epsilon_N.
	\label{eq:norms-same}
	\end{equation}
	Indeed, measurement $(X,Y)=(x^1,y^1)$ operation composed with an LOPC protocol ${\cal P}_N$ is one of the linear operations satisfying RC, hence we have:
	\begin{eqnarray}
	\sup_{\theta \in RC}||\theta[{\Lambda}_N(P(AB(AB)|XY)^{\otimes N}) - \hat{P}_{\mathrm{ideal}}^{(d_N)}]||_{1} \leq \epsilon_N \implies
	||{\cal P}_N(P(ABE_1E_2|X=x^1,Y=y^1)^{\otimes N}) - \tilde{P}_{\mathrm{ideal}}^{(d_N)}||_{1} \leq \epsilon_N,
	\end{eqnarray}
	where $\tilde{P}_{\mathrm{ideal}}^{(d_N)} =\frac{\delta{A=a,B=b}}{d_N}\sum_{a,b}P(A=a,B=b,E_1E_2|X=x^1,Y=y^1)$. We use now contractivity of the $||.||_1$ norm under stochastic maps, including $\Omega_E$ which maps $E_1E_2$ to $AB$, to obtain finally (\ref{eq:norms-same}) with $P^{(d_N)}_{\mathrm{ideal}}=\frac{\delta(A=a,B=b)}{ d_N} P(AB)$.
	Now $P(AB(AB)|X=x^1,Y=y^1)^{\otimes N}$ is a tripartite probability distribution which we denote as $P(AB(AB))^{\otimes N}$. It is an instance of SKA scenario.
	We can therefore apply the Theorem \ref{thm:SKA_bound} (with $\delta_N = \epsilon_N)$.
	Indeed, in ${\cal P}_N$ we recognize an LOPC operation such that
	(since $\epsilon_N$ can be arbitrarily small)
	we have:
	\begin{equation}
	S(A:B||(AB)) \geq \frac{\log \mathrm{dim}_A({\cal P}_N(P(AB(AB))^{\otimes N}))}{N}.
	\end{equation}
	Now by Eq. (\ref{eq:eta-close}) and (\ref{eq:eta-optimal}) there is:
	\begin{eqnarray}
	S(AB||(AB)) \geq \frac{\log \mathrm{dim}_A({\cal P}_N(P(AB(AB))^{\otimes N}))}{N} &\geq& \nonumber\\
	\frac{\log \mathrm{dim}_A({\Lambda}_N(P^{(N)}(ABE_1E_2|XY)))}{N}&\geq&  K_D^{RC}(P(ABE_1E_2|XY)) -\eta.
	\label{eq:SKAmajorizes}
	\end{eqnarray}
	Since $\eta$ was arbitrary, we can set it to $0$, keeping the above inequality true.
	It is enough to observe now that 
	\begin{equation}
	S(A:B||(AB))_{P(AB(AB))} \leq I(A:B\downarrow (AB))_{P(AB(AB))} = 0
	\label{eq:SKAzero}
	\end{equation}
	The inequality is thanks to Theorem \ref{thm:Maurer-Wolf}.
	Equality holds due to the fact that the intrinsic information $I(A:B\downarrow (AB))$ equals $0$. Indeed, there is 
	\begin{eqnarray}I(A:B|(AB))= H(A(AB)) + H(B(AB)) -H(AB) - H(AB(AB)) = \nonumber\\ H(AAB) + H(BAB) -H(AB) - H(ABAB) = H(AB) + H(BA) - H(AB) - H(AB) =0,\end{eqnarray} and so  $I(A:B\downarrow AB)=\inf_{\kappa:AB\rightarrow E'} I(A:B|E')=0$, as infimum is achieved for $\kappa$ being an identity operation. From Eq. (\ref{eq:SKAmajorizes}), and Eq. (\ref{eq:SKAzero}),  we conclude that
	\begin{equation}
	K_D^{RC}(P(ABE_1E_2|XY)) \leq S(A:B||(AB))\leq 0.
	\end{equation}
	By definition $K_D^{RC} \geq 0$ as the rate $0$ is achieved for a protocol which traces out the input yielding output with $d_N=1$. Hence the assertion follows from the above inequality.
\end{proof}

In the above proof we have considered $M=2$ of the honest parties. We argue now, that analogous result holds for the conference key obtained by the $M>2$ parties, secure against RC adversaries. First, the analogue of a technical Theorem \ref{thm:SKA_bound} of \cite{NSDI} is straightforward. Then, the proof of an analog of Theorem \ref{thm:no-go-for-RCkey} goes along
similar lines as for $M=2$, with a modification in Eq. (\ref{eq:SKAzero}). There we base on the following analogue of Theorem \ref{thm:Maurer-Wolf} shown in \cite{multisquash} (see Theorem 4, and Example 2 there):
\begin{equation}
S(B_1:B_2:\ldots :B_M||E)_{P(B_1,\ldots,B_ME)}\leq \frac{1}{ (M-1)}I(B_1:B_2:\ldots :B_M \downarrow E)_{P(B_1,\ldots,B_ME)},
\end{equation}
for any $M+1$-partite device $P(B_1,\ldots,B_ME)$, where $S(B_1:B_2:\ldots :B_M||E)$ denotes the so called {\it conference key}, while  $I(B_1:B_2:\ldots :B_M \downarrow E) = \sum_{i=1}^M H(B_i,E) - H(B_1,\ldots, B_M,E) - (M-1) H(E)$.
The fact that $I(B_1,\ldots,B_M\downarrow E)$ equals zero for $E=(AB)$ can be checked by direct inspection.

\section*{Appendix E: Dimensionality of the RC polytope}

In this appendix we compute the dimensionality $\mathcal{D}\left[\ldots\right]$
of the polytope of RC correlations in the three party $m$ inputs, $n$ outputs $(3,m,n)$ scenario.
We proceed with our calculation in three steps: 1) begin with the general
set of constraints and divide them in appropriate subsets, 2) compute
in detail the dimensionality of the $(3,2,2)$ scenario (i.e. $\mathcal{D}\left[ RC\left(3,2,2\right)\right]$)
and 3) reproduce computation in 2) for the general scenario of $(3,m,n)$
with the corresponding alterations. 

\subsection*{Step 1: General Setting}

The general setting corresponds to the 3 party, m inputs, n outputs $(3,m,n)$ scenario with correlations satisfying the following constraints :   \begin{eqnarray} 	 P(a,b,c|x,y,z)  & \in & [0,1] \hspace{0.2cm} \forall_{x,y,z,a,b,c}   \label{eq:1.1} \\ 	 \sum_{a,b,c} P(a,b,c|x,y,z) & = & 1 \hspace{0.75cm}\forall_{x,y,z}   \label{eq:1.2} \\   	P(b,c|y,z) = \sum_aP(a,b,c|x,y,z) & = &  \sum_aP(a,b,c|x',y,z) \hspace{0.3cm} \forall_{x,x',y,z,b,c}   \label{eq:1.3}\\ 	P(a,b|x,y) = \sum_cP(a,b,c|x,y,z) & = &  \sum_cP(a,b,c|x,y,z') \hspace{0.3cm} \forall_{z,z',x,y,a,b}   \label{eq:1.4}\\ 	P(a|x) = \sum_{b,c}P(a,b,c|x,y,z) & = &  \sum_{b,c}P(a,b,c|x,y',z') \hspace{0.1cm} \forall_{y,y',z,z',x,a}   \label{eq:1.5}\\ 	P(c|z) = \sum_{a,b}P(a,b,c|x,y,z) & = &  \sum_{a,b}P(a,b,c|x',y',z) \hspace{0.1cm} \forall_{x,x',y,y',z,c}   \label{eq:1.6}   \end{eqnarray} We divide the equalities (\ref{eq:1.2})-(\ref{eq:1.6}) into three sets of constraints $\mathcal{N}=\{(\ref{eq:1.2})\}$, $\mathcal{P}=\{(\ref{eq:1.3}),(\ref{eq:1.4})\}$ and $\mathcal{RC}=\{(\ref{eq:1.5}),(\ref{eq:1.6})\}$. The cardinalities of these sets, for any $m,n$, are given by:   \begin{eqnarray} 	|\mathcal{N}| & = & m^3 \\   \label{eq:1.7} 	|\mathcal{P}| & = & 2 m^2n^2(m-1) \\   \label{eq:1.8} 	|\mathcal{RC}| & = & 2mn(mn-1),   \label{eq:1.9}   \end{eqnarray} and together fully describe the $(3,m,n)$ RC polytope. \\ 

Since the set of normalization constraints $\mathcal{N}$ involves
mutually independent equalities we consider them - without loss of
generality- as independent and describe the dependencies of equations
in other sets with respect to them. 

\subsection*{Step 2: Computing $\mathcal{D}\left[RC\left(3,2,2\right)\right]$}

Here we discuss in detail mutual dependencies between equalities in and between the sets $\mathcal{N}$, $\mathcal{P}$ and $\mathcal{RC}$ for the (3,2,2) scenario. We begin by writing explicitly  all equations of $\mathcal{P}$ and $\mathcal{RC}$ in the form of tables:   \begin{equation} 	\begin{array}{c|cccc|cccc} 	\mathcal{P} && c1 & & c2 && c3 & & c4 \\\hline 	 & \textbf{Q1} & &&& \textbf{Q5} &&& \\ 	  r1 &&  \sum_a P(a00|000) & = & \sum_a P(a00|100) && \sum_c P(00c|000) & = & \sum_c P(00c|001) \\ 	  r2 &&  \sum_a P(a01|000) & = & \sum_a P(a01|100) && \sum_c P(01c|000) & = & \sum_c P(01c|001) \\ 	  r3 &&  \sum_a P(a10|000) & = & \sum_a P(a10|100) && \sum_c P(10c|000) & = & \sum_c P(10c|001) \\ 	  r4 &&  \sum_a P(a11|000) & = & \sum_a P(a11|100) && \sum_c P(11c|000) & = & \sum_c P(11c|001) \\\hline 	    & \textbf{Q2} & &&& \textbf{Q6} &&&  \\ 	  r5 &&  \sum_a P(a00|001) & = & \sum_a P(a00|101) && \sum_c P(00c|010) & = & \sum_c P(00c|011) \\ 	  r6 &&  \sum_a P(a01|001) & = & \sum_a P(a01|101) && \sum_c P(01c|010) & = & \sum_c P(01c|011) \\ 	  r7 &&  \sum_a P(a10|001) & = & \sum_a P(a10|101) && \sum_c P(10c|010) & = & \sum_c P(10c|011) \\ 	  r8 &&  \sum_a P(a11|001) & = & \sum_a P(a11|101) && \sum_c P(11c|010) & = & \sum_c P(11c|011) \\\hline 	    & \textbf{Q3} & &&& \textbf{Q7} &&&  \\ 	  r9 && \sum_a P(a00|010) & = & \sum_a P(a00|110) && \sum_c P(00c|100) & = & \sum_c P(00c|101) \\ 	  r10 &&  \sum_a P(a01|010) & = & \sum_a P(a01|110) && \sum_c P(01c|100) & = & \sum_c P(01c|101) \\ 	  r11 &&  \sum_a P(a10|010) & = & \sum_a P(a10|110) && \sum_c P(10c|100) & = & \sum_c P(10c|101) \\ 	  r12 &&  \sum_a P(a11|010) & = & \sum_a P(a11|110) && \sum_c P(11c|100) & = & \sum_c P(11c|101) \\\hline 	    & \textbf{Q4} & &&& \textbf{Q8} &&&  \\ 	  r13 &&  \sum_a P(a00|011) & = & \sum_a P(a00|111) && \sum_c P(00c|110) & = & \sum_c P(00c|111) \\ 	  r14 &&  \sum_a P(a01|011) & = & \sum_a P(a01|111) && \sum_c P(01c|110) & = & \sum_c P(01c|111) \\ 	  r15 &&  \sum_a P(a10|011) & = & \sum_a P(a10|111) && \sum_c P(10c|110) & = & \sum_c P(10c|111) \\ 	  r16 &&  \sum_a P(a11|011) & = & \sum_a P(a11|111) && \sum_c P(11c|110) & = & \sum_c P(11c|111) \\\hline 	\end{array} 	\nonumber   \end{equation}
\begin{equation} 	\begin{array}{c|cccc|cccc} 	\mathcal{RC} && c1 & & c2 && c3 & & c4 \\\hline 	 & \textbf{Q1} & &&& \textbf{Q5} &&& \\ 	 r1 &&  \sum_{a,b} P(ab0|000) & = & \sum_{a,b} P(ab0|010) && \sum_{b,c} P(0bc|000) & = & \sum_{b,c} P(0bc|001) \\ 	 r2 && 				 & = & \sum_{a,b} P(ab0|100) && 			 & = & \sum_{b,c} P(0bc|010) \\ 	 r3 &&  			 & = & \sum_{a,b} P(ab0|110) && 			 & = & \sum_{b,c} P(0bc|011) \\ 	 & \textbf{Q2} & &&& \textbf{Q6} &&& \\ 	 r4 &&  \sum_{a,b} P(ab1|000) & = & \sum_{a,b} P(ab1|010) && \sum_{b,c} P(1bc|000) & = & \sum_{b,c} P(1bc|001) \\ 	 r5 && 				 & = & \sum_{a,b} P(ab1|100) && 			 & = & \sum_{b,c} P(1bc|010) \\ 	 r6 &&  			 & = & \sum_{a,b} P(ab1|110) && 			 & = & \sum_{b,c} P(1bc|011) \\ 	 & \textbf{Q3} & &&& \textbf{Q7} &&& \\ 	 r7 &&  \sum_{a,b} P(ab0|001) & = & \sum_{a,b} P(ab0|011) && \sum_{b,c} P(0bc|100) & = & \sum_{b,c} P(0bc|101) \\ 	 r8 && 				 & = & \sum_{a,b} P(ab0|101) && 			 & = & \sum_{b,c} P(0bc|110) \\ 	 r9 &&  			 & = & \sum_{a,b} P(ab0|111) && 			 & = & \sum_{b,c} P(0bc|111) \\ 	 & \textbf{Q4} & &&& \textbf{Q8} &&& \\ 	 r10 &&  \sum_{a,b} P(ab1|001) & = & \sum_{a,b} P(ab1|011) && \sum_{b,c} P(1bc|100) & = & \sum_{b,c} P(1bc|101) \\ 	 r11 && 			& = & \sum_{a,b} P(ab1|101) && 			 & = & \sum_{b,c} P(1bc|110) \\ 	 r12 &&  			& = & \sum_{a,b} P(ab1|111) && 			 & = & \sum_{b,c} P(1bc|111) \\ 	\end{array} 	\nonumber   \end{equation}

We use this table as a means to refer to its elements (terms of sums
of probabilities) using rows and columns  (e.g. $\sum_{a,b} P(ab0|010) \equiv \mathcal{RC}(1,2)$)
and to define sub-tables referred as sectors (e.g. $\mathcal{P}({\textbf{Q1}})$ or $\mathcal{RC}({\textbf{Q2}})$).

Consider  $\mathcal{P}$. In each sector $\mathcal{P}(\textbf{Qi})$, $i\in\{1,\ldots,8\}$, the last equality is implied by the previous ones and one of 8 normalization conditions in $\mathcal{N}$, which gives 8 dependent equalities. There are two more redundant conditions that can be found by writing two  sequences of equalities that begin and end with the same sum of probabilities, but with different rows or columns in the tables above. In sectors $\{\mathcal{P}(\textbf{Q1}),\mathcal{P}(\textbf{Q2}),\mathcal{P}(\textbf{Q5}),\mathcal{P}(\textbf{Q7})\}$ and $\{\mathcal{P}(\textbf{Q3}),\mathcal{P}(\textbf{Q4}),\mathcal{P}(\textbf{Q6}),\mathcal{P}(\textbf{Q8})\}$ we identify the corresponding two sequences (\ref{eq:D.1.1}) and (\ref{eq:D.1.2}) respectively. We designate these  kind of sequences as \textit{closed paths}. \begin{equation} 	\begin{array}{ccccccc} 	 \mathcal{P}(1,2) + \mathcal{P}(2,2) & = & \mathcal{P}(1,1) + \mathcal{P}(2,1) & = & \mathcal{P}(1,3) + \mathcal{P}(3,3) & = & \mathcal{P}(1,4) + \mathcal{P}(3,4)\\ 					     & = & \mathcal{P}(5,1) + \mathcal{P}(6,1) & = & \mathcal{P}(5,2) + \mathcal{P}(6,2) & = & \mathcal{P}(9,4) + \mathcal{P}(11,4)\\ 					     & = & \mathcal{P}(9,3) + \mathcal{P}(11,3)  	\end{array}   \label{eq:D.1.1}   \end{equation}  
\begin{equation} 	\begin{array}{ccccccc} 	 \mathcal{P}(9,2) + \mathcal{P}(10,2) & = & \mathcal{P}(9,1) + \mathcal{P}(10,1) & = & \mathcal{P}(5,3) + \mathcal{P}(7,3) & = & \mathcal{P}(5,4) + \mathcal{P}(7,4)\\ 					     & = & \mathcal{P}(13,1) + \mathcal{P}(14,1) & = & \mathcal{P}(13,2) + \mathcal{P}(14,2) & = & \mathcal{P}(13,4) + \mathcal{P}(15,4)\\ 					     & = & \mathcal{P}(13,3) + \mathcal{P}(15,3)  	\end{array}   \label{eq:D.1.2}   \end{equation}
\begin{equation} 	\begin{array}{c cc cc cc cc} 	  \sum_{ac}P(a0c|100) & = & \sum_aP(a00|100) + \sum_aP(a01|100) & = & \mathcal{P}(1,2) + \mathcal{P}(2,2)  	  \\ & =& \ldots  & = & \mathcal{P}(9,3) + \mathcal{P}(11,3) \\ 	  & = & \sum_cP(00c|100) + \sum_cP(10c|100) & = & \sum_{ac}P(a0c|100) 	\end{array}   \label{eq:D.1.3}   \end{equation}
\begin{equation} 	\begin{array}{c cc cc cc cc} 	  \sum_{ac}P(a0c|110) & = & \sum_aP(a00|110) + \sum_aP(a01|110) & = & \mathcal{P}(9,2) + \mathcal{P}(10,2)   	  \\ & = & \ldots  & = & \mathcal{P}(13,3) + \mathcal{P}(15,3)  \\ 	  & = & \sum_cP(00c|110) + \sum_cP(10c|110) & = & \sum_{ac}P(a0c|110) 	\end{array}   \label{eq:D.1.4}   \end{equation}
Notice that first and last terms in each pair $((\ref{eq:D.1.1}) , (\ref{eq:D.1.2}))$ and  $((\ref{eq:D.1.3}), (\ref{eq:D.1.4}))$, describe the same values.

From this observation, it  follows that one equality is dependent in $\{\mathcal{P}(\textbf{Q1}),\mathcal{P}(\textbf{Q2}),\mathcal{P}(\textbf{Q5}),\mathcal{P}(\textbf{Q7})\}$ and similarly one  in  $\{\mathcal{P}(\textbf{Q3}),\mathcal{P}(\textbf{Q4}),\mathcal{P}(\textbf{Q6}),\mathcal{P}(\textbf{Q8})\}$. This, for the first case, can be schematically represented as:
\begin{eqnarray}  &\left(     \begin{array}{ccc} 	  \left\{\begin{array}{c cc } 		\mathcal{P}(1,1) & = & \mathcal{P}(1,2) \\ 		\mathcal{P}(2,1) & = & \mathcal{P}(2,2) \\ 		\mathcal{P}(9,3) & = & \mathcal{P}(9,4) \\ 		\mathcal{P}(11,3) & = & \mathcal{P}(11,4) 	  \end{array}\right\} 	  & + & 	  \left\{\begin{array}{c cc } 		\mathcal{P}(1,2) + \mathcal{P}(2,2) & = &  \mathcal{P}(9,3) + \mathcal{P}(11,3)  	  \end{array}\right\} 	       \end{array}\right) &  \nonumber\\ 	  && \nonumber\\ 	  & \Downarrow &   \nonumber\\ 	  &&  \nonumber\\       &\left(       \begin{array}{ccc} 	  \left\{\begin{array}{c cc } 		\mathcal{P}(1,2) + \mathcal{P}(2,2) & = &  \mathcal{P}(9,3) + \mathcal{P}(11,3)  \\ 		\mathcal{P}(2,1) & = & \mathcal{P}(2,2) \\ 		\mathcal{P}(9,3) & = & \mathcal{P}(9,4) \\ 		\mathcal{P}(11,3) & = & \mathcal{P}(11,4) 	  \end{array}\right\} 	  & \Rightarrow & 	  \left\{\begin{array}{c cc } 		\mathcal{P}(1,1) & = & \mathcal{P}(1,2)  	  \end{array}\right\}      \end{array}\right)&  \nonumber   \end{eqnarray}For the second case an analogous reasoning shows the redundancy of one equation. Closed paths (\ref{eq:D.1.1}) and (\ref{eq:D.1.2}) are the shortest possible paths in $\mathcal{P}$ so there are no more dependent equalities leaving in total $8+22$ independent conditions for the set of constraints $\mathcal{N}\cup\mathcal{P}$. 

Now, consider the full set of RC constraints $\mathcal{N}\cup\mathcal{P}\cup\mathcal{RC}$. Due to the normalization conditions, it follows that each  sector $\mathcal{RC}(\textbf{Q}i+1)$, $i\in{1,3,5,7}$ is implied by $\mathcal{RC}(\textbf{Q}i)$ giving 12 dependent conditions. Furthermore in each of the remaining sectors of $\mathcal{RC}$ two out of three equalities are implied by $\mathcal{P}$. As an example consider sector $\mathcal{RC}(\textbf{Q1})$, then write:  \begin{eqnarray} 	\mathcal{RC}(1,1) = \mathcal{RC}(2,2) & \Leftrightarrow &  \mathcal{P}(1,1) + \mathcal{P}(3,1) = \mathcal{P}(1,2) + \mathcal{P}(3,2)\\ 	\mathcal{RC}(1,2) = \mathcal{RC}(3,2) & \Leftrightarrow &  \mathcal{P}(9,1) + \mathcal{P}(11,1) = \mathcal{P}(9,2) + \mathcal{P}(11,2)   \end{eqnarray}In other words two out of three equalities is sector  $\mathcal{RC}(\textbf{Q1})$ are implied by sectors $\mathcal{P}(\textbf{Q1})$ and $\mathcal{P}(\textbf{Q3})$. Analogously sectors $\{\mathcal{P}(\textbf{Q2},\mathcal{P}(\textbf{Q4}\}$, $\{\mathcal{P}(\textbf{Q5},\mathcal{P}(\textbf{Q6}\}$ and $\{\mathcal{P}(\textbf{Q7},\mathcal{P}(\textbf{Q8}\}$ leave only one independent equation in sectors   $\mathcal{RC}(\textbf{Q3})$, $\mathcal{RC}(\textbf{Q5})$ and $\mathcal{RC}(\textbf{Q7})$ respectively. In summary, the RC (3,2,2) polytope is fully described by 34 independent conditions so its dimensionality is $\mathcal{D}\left[RC\left(3,2,2\right)\right]=64-34=30$.

\subsection*{Step 3: Computing $\mathcal{D}\left[RC\left(3,m,n\right)\right]$}
We now proceed to compute the dimensionality of the RC polytope in the general $(3,m,n)$ scenario. 
Like in Step 2, we first consider the set $\mathcal{P}$. Notice that using normalization conditions we can delete   $2(m-1)$ equations in each of the $2 m^2$ sectors $\mathcal{P}(\textbf{Q})$. To construct closed paths between sectors one needs probabilities that for a given input and output of Bob, sum over all outputs of Alice and Charlie. This, due to normalization that removes e.g. last row in each sector, can be done uniquely for  $n-1$ outputs and  $m$ inputs of Bob  for any choice of $(m-1)^2$ combinations of columns  for Alice and  Charlie. This, in total, gives  $2m^2(m-1) + (n-1)m(m-1)^2$) dependent equalities and by Eq.(\ref{eq:1.8}), $2m^2n^2(m-1)+m^2n(2-m)+m(1-n)$ independent equalities. 

For the set $\mathcal{N}\cup\mathcal{P}\cup\mathcal{RC}$ normalization conditions together with sectors $\{\mathcal{RC}{(\textbf{Qi})},\ldots,\mathcal{RC}{(\textbf{Qi+m-1})}\}$ imply sector $\mathcal{RC}{(\textbf{Qi+m})}$ for $i\in\{l\cdot m\}$ with $l=\{0,1,\ldots,2*(m-1)\}$ leaving  $2(n-1)m$ sectors. By a similar argument as in the (3,2,2) scenario, in each remaining sector $\mathcal{RC}(\textbf{Q})$ constraints in $\mathcal{P}$ imply all sums of probabilities with the same input of Bob leaving only   $m-1$ equations. This gives $2(n-1)m(m-1)$ independent equalities. Subtracting the total number of independent conditions from $(m\cdot n)^3$ gives the dimensionality of RC polytope in $(3,m,n)$ scenario as:   
\begin{equation}
\mathcal{D}\left[RC\left(3,m,n\right)\right]=[m(n\text{\textminus}1)+1]^{3}+m^{2}(m\text{\textminus}1)(n\text{\textminus}1)^{2}\text{\textminus}1
\end{equation}

\section*{Appendix F: Nontrivial bounds for Relativistic Causal correlations.}



The main contribution of our article is the proof that  two eavesdroppers can collaborate to break any device-independent security protocol if they can prepare devices with the strongest correlations allowed by RC theories. The natural assumption that the eavesdroppers  can choose freely space-time positions is shown here to be relevant for the proof of the No-go theorem. The reason is because  eavesdroppers must choose necessarily appropriate positions for the measuring devices to reach the strongest correlations allowed by RC. In this appendix we show how a restriction in the space-time positions of the eavesdroppers could limit the devices correlations even bellow the strength of quantum correlations, demonstrating that the selection of the space-time positions is crucial for the attack of the eavesdroppers. 
\\

We firstly study trade-off relations between three-party Svetlichny expressions $\langle \mathcal{I} \rangle_{ACD}, \langle \mathcal{I} \rangle_{BCD}$ of the form
\begin{equation}
\left\langle \mathcal{I}\right\rangle _{ACD}+\left\langle \mathcal{I}\right\rangle _{BCD}\leq2\mathcal{B}
\end{equation}
where $\mathcal{B}$ is the so-called ''Broadcast" bound. 
We remark that the distinguishing
feature of RC correlations is the point to region (PTR) signaling,
described in detail in Appendix A, namely that in certain measurement configurations, a single party can signal to a region thus influencing the correlations between two or more other parties.

Consider a three-party situation with measurement inputs $x,y,z$ and outputs $a,b,c$ for Alice, Bob and Charlie respectively. Broadcasting correlations represent the situation when  one party sends all the information about its measurement setting and outcome to the other two parties. In \cite{ref29, sp0}, it was pointed out that quantum correlations violate broadcasting correlations and this can be regarded as an alternative notion of genuine multi-partite nonlocality. Tripartite broadcasting correlations $P(a,b,c|x,y,z)$ are defined as follows, 
\begin{equation}\label{BC3} \begin{split} & P(a,b,c|x,y,z)\\& \quad =  \sum_{\lambda_1} q(\lambda_1)  P(a|x,\lambda_1) P(b|y,x,a,\lambda_1) P(c|z,x,a,\lambda_1) \\& \quad + \sum_{\lambda_2} q(\lambda_2) P(b|y,\lambda_2) P(a|x,y,b,\lambda_2) P(c|z,y,b,\lambda_2) \\& \quad +\sum_{\lambda_3} q(\lambda_3) P(c|z,\lambda_3) P(b|y,z,c,\lambda_3) P(a|x,z,c,\lambda_3). \end{split} 
\end{equation}  
Observe that in the first term, Bob's output $b$ and Charlie's output $c$ depend upon Alice's input and output $x,a$, in the case where Alice has broadcast these, and similarly for the other two terms. The following lemma makes a connection between broadcast correlations (BC) and relativistic causal (RC) correlations, under the constraint that some of the observables are jointly measurable. 

\begin{lemma}[2]
Any RC tripartite probability distribution can be realized by a broadcast model with the additional condition that all the observables, measured by one party who does not signal PTR, are co-measurable
\end{lemma}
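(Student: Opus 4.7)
The strategy is to exploit co-measurability of Bob's observables to classicalize Bob's contribution, and then fold the residual bipartite Alice-Charlie correlations into an Alice-broadcast decomposition of the BC3 template of Eq.~(\ref{BC3}). Without loss of generality, let Bob be the party with co-measurable observables and no PTR signaling, with inputs $y \in \{1, \ldots, m\}$ and outputs $b$.

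First I would introduce the hidden variable $\boldsymbol{\beta} = (\beta_1, \ldots, \beta_m)$, encoding Bob's output for each of his $m$ possible inputs. Co-measurability provides a joint distribution $q(\boldsymbol{\beta})$ whose $y$-th marginal is $P(b \mid y)$. I would then extend this to a joint tripartite distribution $\tilde P(a, \boldsymbol{\beta}, c \mid x, z)$ satisfying
\begin{equation*}
\sum_{\boldsymbol{\beta} \,:\, \beta_y = b} \tilde P(a, \boldsymbol{\beta}, c \mid x, z) = P(a, b, c \mid x, y, z) \quad \forall a, b, c, x, y, z.
\end{equation*}
The no-PTR hypothesis is precisely what makes this extension consistent across different $y$: it guarantees the Alice-Charlie marginal $\sum_b P(a, b, c \mid x, y, z)$ is $y$-independent, so a single $y$-free $\tilde P$ can simultaneously reproduce every $y$-slice of the original RC distribution. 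Moreover, $q(\boldsymbol{\beta}) = \sum_{a,c} \tilde P(a, \boldsymbol{\beta}, c \mid x, z)$ is $(x,z)$-independent by the RC no-signaling from Alice and Charlie to Bob.

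Second, I would identify $\tilde P$ with the Alice-broadcast term of BC3. Applying the chain rule,
\begin{equation*}
\tilde P(a, \boldsymbol{\beta}, c \mid x, z) = q(\boldsymbol{\beta}) \, \tilde P(a \mid x, \boldsymbol{\beta}) \, \tilde P(c \mid z, x, a, \boldsymbol{\beta}),
\end{equation*}
where $\tilde P(a \mid x, \boldsymbol{\beta})$ should be $z$-independent by lifting the RC no-signaling from Charlie to Alice and Bob to the extension. Matching $\lambda_1 = \boldsymbol{\beta}$ with $q(\lambda_1) = q(\boldsymbol{\beta})$, $P(b \mid y, x, a, \lambda_1) = \delta(b = \beta_y)$, and reading off the other factors, direct substitution then recovers $P(a, b, c \mid x, y, z) = \sum_{\boldsymbol{\beta} \,:\, \beta_y = b} \tilde P(a, \boldsymbol{\beta}, c \mid x, z)$, so the decomposition holds as the BC3 Alice-broadcast term alone (with the Bob- and Charlie-broadcast weights set to zero).

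The crux is the first step: constructing $\tilde P$. Co-measurability only supplies a joint over Bob's outputs in isolation; lifting it to a joint including Alice's and Charlie's outputs is a multi-marginal extension problem that is not automatic. I expect the proof to proceed via a Kolmogorov-style compatibility argument, writing down the linear constraints defining the set of valid $\tilde P$ and invoking Farkas' lemma (or an explicit affine construction from the co-measurable joint together with the RC distribution) to produce a solution, with the no-PTR hypothesis being exactly what guarantees consistency. A secondary technicality is verifying that the extension can be chosen so that $\tilde P(a \mid x, \boldsymbol{\beta})$ is $z$-independent; since $\tilde P$ is generally not unique, one expects to be able to select an extension respecting all RC no-signaling constraints, not only no-PTR.
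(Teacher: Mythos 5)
Your outline defers the entire burden of the proof to a step you do not carry out. The construction of the extension $\tilde{P}(a,\boldsymbol{\beta},c\mid x,z)$ --- simultaneously reproducing every $y$-slice $P(a,b,c\mid x,y,z)$ upon setting $\beta_y=b$, having $(x,z)$-independent $\boldsymbol{\beta}$-marginal $q(\boldsymbol{\beta})$, \emph{and} admitting a chain-rule factorization in which $\tilde{P}(a\mid x,\boldsymbol{\beta})$ is $z$-independent --- is not a ``secondary technicality'': it is precisely the content of the lemma, and you offer only the expectation that ``a Kolmogorov-style compatibility argument'' or Farkas' lemma will deliver it. This is where the danger lies. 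An easy coupling (e.g.\ making the $\beta_y$ conditionally independent given $(a,c,x,z)$) satisfies the marginal constraints but uses co-measurability nowhere and generically violates both independence requirements; conversely, the set of extensions satisfying all constraints at once is a nontrivial multi-marginal feasibility problem whose nonemptiness you have not argued. You correctly identify that no-PTR makes the $(a,c)$-marginals of the different $y$-slices agree, but that consistency condition alone does not produce a joint with the broadcast structure. As written, the proposal is an ansatz plus a to-do list, not a proof.

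For comparison, the paper's argument avoids the marginal-extension problem entirely and is worth internalizing. Working in the configuration of Fig.~1 (Bob is the PTR signaler; Alice is the non-PTR party with mutually co-measurable observables), it first notes that the RC constraints make $P(b\mid y)$ well defined, so $P(a,b,c\mid x,y,z)=P(a,c\mid x,y,z,b)\,P(b\mid y)$. It then considers, for fixed $(y,b)$, the commutation graph on Alice's and Charlie's observables: every pair $x,x'$ is an edge (co-measurability) and every pair $x,z$ is an edge (distinct parties), so every four-cycle $x\text{--}z\text{--}x'\text{--}z'$ has the chord $x\text{--}x'$ and the graph is chordal. By the chordality result of \cite{RSKK12} a joint distribution over all these observables exists, and Fine's theorem then factorizes $P(a,c\mid x,y,z,b)=P(a\mid x,y,b)\,P(c\mid y,z,b)$, which is exactly the $\lambda_2$ (middle-party-broadcast) term of Eq.~\eqref{BC3} with $q(\lambda_1)=q(\lambda_3)=0$. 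Note also that the paper lands in the broadcast term associated with the \emph{PTR-signaling} party $(y,b)$, whereas you aim for the $\lambda_1$ term keyed to the co-measurable party's global output assignment $\boldsymbol{\beta}$; the former is the natural target, since broadcasting one's input and outcome is exactly the resource that mimics PTR influence on the remaining pair. If you want to salvage your route, the chordal-graph/Fine factorization is the missing ingredient that would let you actually exhibit a valid $\tilde{P}$.
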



\begin{proof}
	Like in Section II of the main text, we consider the tripartite spacetime measurement configuration in Fig. \ref{Figure1} 
	where Bob signals PTR (i.e. to the correlations between $A$ and $C$) so that the RC
	constraints are given by the set of equations
	\begin{eqnarray}
	\sum_{a}P\left(a,b,c\mid x,y,z\right) & = & \sum_{a}P\left(a,b,c\mid x^{\prime},y,z\right)\:\forall x,x^{\prime},y,z,b,c\label{eq:AtoBC}\\
	\sum_{c}P\left(a,b,c\mid x,y,z\right) & = & \sum_{c}P\left(a,b,c\mid x,y,z^{\prime}\right)\:\forall z,z^{\prime},x,y,a,b\label{eq:CtoAB}\\
	\sum_{b,c}P\left(a,b,c\mid x,y,z\right) & = & \sum_{b,c}P\left(a,b,c\mid x,y^{\prime},z^{\prime}\right)\:\forall y,y^{\prime},z,z^{\prime},x,a\label{eq:BCtoA}\\
	\sum_{a,b}P\left(a,b,c\mid x,y,z\right) & = & \sum_{a,b}P\left(a,b,c\mid x^{\prime},y^{\prime},z\right)\:\forall x,x^{\prime},y,y^{\prime},z,c\label{eq:BAtoC}
	\end{eqnarray}
	From the first two conditions, we also clearly have,
	\begin{equation} \sum_{a,c}P(a,b,c|x,y,z) = \sum_{a,c} P(a,b,c|x',y,z').  \end{equation}
	This implies that $P(b|x,y,z)=P(b|y)$ is independent of $x,z$. Now, any RC tripartite probability distribution can be written as, 
	\begin{eqnarray}  P(a,b,c|x,y,z) &=& P(a,c|x,y,z,b) P(b|x,y,z) \nonumber \\ &=& P(a,c|x,y,z,b) P(b|y) 
	\end{eqnarray}
	Without loss of generality let's say that all the observables $x$ measured by Alice are co-measurable. Also, let's remember some useful concepts:  a \emph{commutation graph} is a graph with vertices representing observables, edges connecting observables that are jointly measurable and a \emph{chordal graph}  is a graph in which all cycles of four or more vertices have a chord going through them.

 In our case we can define a commutation graph of all the observables measured by Alice and Charlie conditioned on a particular pair of Bob's observable and outcome $y,b$. In the commutation graph, all pairs $x,x'$ and $x,z$ are connected, so that this commutation graph is chordal.  For chordal graphs of measurements corresponds an expression for which a joint probability distribution exists and which is hence classical \cite{RS12}. Therefore  exists an overall joint probability distribution of all $x,z$ conditioned on $y,b$. By the Fine's theorem \cite{fine} we conclude that $P(a,c|x,y,z,b) = P(a|x,y,b)P(c|y,z,b)$. Thus, \begin{equation} P(a,b,c|x,y,z) = P(a|x,y,b) P(c|y,z,b)P(b|y)\end{equation}
which is a particular form of the broadcast correlations given in  \eqref{BC3} in which $q(\lambda_1)=q(\lambda_3)=0$ and $\lambda_2$ is unique. 
\end{proof}

Secondly, we consider the Bell scenario involving four spatially separated parties Alice(A), Bob(B), Charlie(C) and Dave(D). Consider any broadcasting inequality $\left\langle \mathcal{I}\right\rangle_{ACD}$ between Alice, Charlie and Dave in which Alice has two measurement settings $x=0,1$. Assume now that $x=0,1$ are co-measurable, then by Lemma 2 :  \begin{equation} \label{s} \left\langle \mathcal{I}\right\rangle_{ACD} = \left\langle \mathcal{I}\right\rangle^{a_0}_{ACD} + \left\langle \mathcal{I}\right\rangle^{a_1}_{ACD} \leq \mathcal{B}, \end{equation} where $\mathcal{B}$ is the upper bound on broadcasting correlations \eqref{BC3}, and $\left\langle \mathcal{I}\right\rangle^{a_0}_{ACD}, \left\langle \mathcal{I}\right\rangle^{a_1}_{ACD}$ are the expressions corresponding to $x=0, 1$ respectively.

\begin{figure}
	
	\includegraphics[scale=0.8]{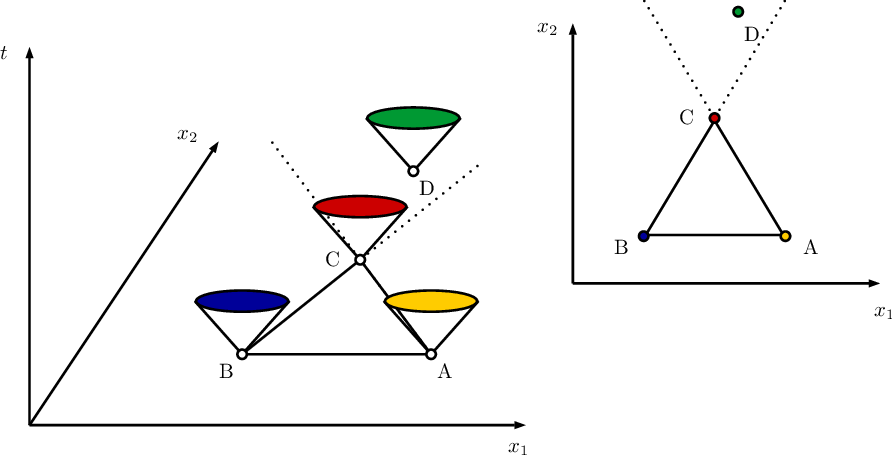}

	\caption{\label{Figure4} Two perspectives of four parties A,B,C,D in a (2+1 D) dimensional spacetime. The four parties make a simultaneous measurement in the particular reference frame of the picture. The measurement events of parties A,B,C form a triangle and party D is in some location inside the region defined by line BC and line AC (Dave's region). The correlations then satisfy a tight Monogamy relation for any broadcast inequality, for instance Svetlichny's inequality \cite{ref29}: $\left\langle \mathcal{I}_{Sve}\right\rangle _{ACD}+\left\langle \mathcal{I}_{Sve}\right\rangle _{BCD}\overset{RC}{\leq}8$. }
\end{figure}

\begin{prop}[3]
 In the four party scenario if the following two conditions hold, \\ (1) $A$ and $B$ do not signal PTR, \\ (2) any observable measured by $A$ and any observables measured by $B$ are non-disturbing (or alternatively no party signals PTR such that it affects the correlations between $A$ and $B$), \\ then the monogamy relation, \begin{equation}\label{mr} \left\langle \mathcal{I}\right\rangle_{ACD}+\left\langle \mathcal{I}\right\rangle_{BCD}\leq 2\mathcal{B}\end{equation} is satisfied in all theories obeying relativistic causality.

\end{prop}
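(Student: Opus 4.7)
The plan is to reduce the monogamy inequality to two independent applications of Lemma 2, one for each of the tripartite marginal distributions $P(a,c,d\mid x,z,w)$ and $P(b,c,d\mid y,z,w)$ obtained from the four-party RC box $P(a,b,c,d\mid x,y,z,w)$, and then simply sum the resulting broadcast bounds.

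First, I would argue that under conditions (1) and (2) the two tripartite marginals are well-defined functions of only the corresponding three parties' inputs. Condition (2), which forbids any party from signaling PTR into the A--B correlations, combined with condition (1) that A and B themselves do not signal PTR, guarantees that summing out Bob's outputs leaves a distribution on $(a,c,d)$ depending only on $(x,z,w)$; the analogous fact holds for BCD after marginalising A. This is the step where the full four-party RC constraints (Appendix A, generalising the tripartite equations in Section II) must be invoked to show that the restriction of the 4-party box to ACD, resp.\ BCD, is itself a genuine tripartite RC box.

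Next, I would apply Lemma 2 separately to each marginal. In the ACD marginal Alice plays the role of the non-PTR-signaling party by condition (1), and the co-measurability requirement of Lemma 2 for A's observables is inherited from the ambient four-party setting together with condition (2): since A and B's observables are non-disturbing and A has a well-defined single-party marginal independent of the other inputs, the chordal commutation graph argument used in the proof of Lemma 2 goes through verbatim inside the tripartite sub-system. Lemma 2 then certifies that $P(a,c,d\mid x,z,w)$ is a broadcast correlation of the form (\ref{BC3}), so by the defining property of the broadcast bound one has $I_{ACD}\leq\mathcal{B}$. Swapping the roles of A and B and repeating the argument with B now in the role of the non-PTR-signaling party gives $I_{BCD}\leq\mathcal{B}$. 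Summing the two bounds immediately yields $I_{ACD}+I_{BCD}\leq 2\mathcal{B}$, which is exactly (\ref{mr}).

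The main obstacle, and essentially the whole content of the proof, is verifying that the co-measurability hypothesis of Lemma 2 survives the passage from the four-party to the tripartite marginals on \emph{both} sides simultaneously. Condition (2) appears to be the minimal extra ingredient beyond (1) precisely because it forbids the kind of PTR signaling by Charlie or Dave that could consistently preserve the marginal A-B statistics yet destroy the chordal commutation graph underlying the broadcast decomposition on one of the two triples. I would have to check carefully that these two conditions together rule out every such obstruction, and also comment on whether either condition can be weakened without producing explicit RC counterexamples to (\ref{mr}).
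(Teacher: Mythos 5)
Your plan has a genuine gap at its central step, and the gap is fatal to the approach. You propose to apply Lemma 2 directly to the tripartite marginal $P(a,c,d\mid x,z,w)$ with Alice as the non-PTR-signaling party, claiming that the co-measurability hypothesis of Lemma 2 for A's observables is ``inherited from condition (2).'' But condition (2) asserts joint measurability of A's observables \emph{with B's observables} (cross-party compatibility); it says nothing about A's own settings $A_0, A_1$ being co-measurable \emph{with each other}, and in a genuine Bell test they are not. Lemma 2 requires all observables of the designated party to be mutually co-measurable so that the commutation graph becomes chordal; that hypothesis simply fails for A inside the ACD marginal. Moreover, if your argument did go through, it would yield $I_{ACD}\leq\mathcal{B}$ and $I_{BCD}\leq\mathcal{B}$ separately, i.e.\ no trade-off at all --- which contradicts the fact (Table 1 of the paper) that RC correlations can push a single Svetlichny expression to its algebraic maximum of $8>\mathcal{B}=4$.

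The idea you are missing is the cross-party regrouping that the paper uses. Writing $I_{ACD}=I^{a_0}_{ACD}+I^{a_1}_{ACD}$ and $I_{BCD}=I^{b_0}_{BCD}+I^{b_1}_{BCD}$ according to the first party's setting, the paper reorders the sum as
\begin{equation}
I_{ACD}+I_{BCD}=\bigl(I^{a_0}_{ACD}+I^{b_1}_{BCD}\bigr)+\bigl(I^{b_0}_{BCD}+I^{a_1}_{ACD}\bigr),
\end{equation}
so that each bracket is the \emph{same} inequality $I$ evaluated for a hybrid first party whose setting $0$ is Alice's $x=0$ and whose setting $1$ is Bob's $y=1$ (and vice versa for the second bracket). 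It is precisely this hybrid party whose two observables are co-measurable by condition (2) and which does not signal PTR by condition (1); Lemma 2 then bounds each bracket by $\mathcal{B}$. Your decomposition into the two physical tripartite marginals cannot exploit condition (2) at all, which is why the argument collapses.
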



\begin{proof} The expression of interest can be written as, \begin{equation}\label{Bbound} \left\langle \mathcal{I}\right\rangle_{ACD}+\left\langle \mathcal{I}\right\rangle_{BCD} = (\left\langle \mathcal{I}\right\rangle^{a_0}_{ACD} + \left\langle \mathcal{I}\right\rangle^{b_1}_{BCD}) + (\left\langle \mathcal{I}\right\rangle^{b_0}_{BCD} +\left\langle \mathcal{I}\right\rangle^{a_1}_{ACD} ) \end{equation} The terms within each bracket can be interpreted as the same inequality $\mathcal{I}$ in which the first party measures $x=0, y=1$ and the second measures $x=1,y=0$. Now, any two observables measured by Alice and Bob are non-disturbing and jointly measurable since no other party signals PTR to influence the correlations between them. Moreover, both the parties do not signal PTR to affect the correlation of others. Thus, from the above Lemma 2, one concludes that each of the two terms is bounded by its broadcasting value within theories obeying relativistic causality, that is, $\mathcal{B}$. Hence, the whole expression is bounded by $2\mathcal{B}$. 
\end{proof}
An example of a measurement configuration given by the space-time location of four parties' measurement events is shown in Figure 4 where the two conditions given in Proposition 3 hold. This example shows that if eavesdroppers are constrained to space-time positions like those allowed to Dave, their correlations are bounded by BC, which are known to be weaker than quantum correlations  \cite{sp0}. This limitation introduced by the restriction on space-time positions --to Dave's region for instance-- seat aside the attack of eavesdroppers since the reliable parties (Alice, Bob and Charlie in the example)  could perform an experiment with quantum correlations  they could not reproduce.

\newpage

\section*{Appendix G: List of Extremal boxes}
\renewcommand*{\arraystretch}{1.4}
\begin{longtable}{ccl}
	\hline
	Class & Prob. & Condition for RC Extremal Boxes Beyond No-signaling Polytope  \\
	\hline
	
	1
	&$1 $&$ a b c (1 \oplus x) (1 \oplus z) == 1$\\	
	&$\frac{1}{2} $&$ b (c x \oplus (a \oplus x y) z) == 1$\\	
	\hline 
	2
	&$1 $&$ a b c (1 \oplus x) y (1 \oplus z) == 1$\\	
	&$\frac{1}{2} $&$ a (c \oplus c y \oplus b z) \oplus b x (c \oplus z \oplus y z) == 1$\\	
	\hline 
	3
	&$\frac{1}{4} $&$ (1 \oplus c) x y \oplus b (c \oplus y \oplus y z \oplus x y z) \oplus a (c \oplus y \oplus z \oplus b z \oplus y z \oplus b c x y z) == 1$\\	
	&$\frac{3}{4} $&$ a b c x y z == 1$\\	
	&$\frac{1}{2} $&$ a b (1 \oplus c \oplus y \oplus z \oplus y z \oplus c x y z) == 1$\\	
	\hline 
	4
	&$\frac{1}{3} $&$ a y (c \oplus z) \oplus b (1 \oplus a \oplus a c \oplus c x \oplus x y \oplus y z \oplus x y z) == 1$\\	
	&$\frac{2}{3} $&$ b c ((1 \oplus x) y z \oplus a (y z \oplus x (1 \oplus y \oplus y z))) == 1$\\	
	\hline 
	5
	&$\frac{1}{5} $&$ x \oplus c x \oplus y \oplus c y \oplus y z \oplus x y z \oplus b (c \oplus x \oplus c x \oplus y \oplus z \oplus x y z) \oplus a (1 \oplus c \oplus y z \oplus b (1 \oplus c y \oplus z)) == 1$\\	
	&$\frac{4}{5} $&$ a b c (1 \oplus x) y (1 \oplus z) == 1$\\	
	&$\frac{2}{5} $&$ b y (c x \oplus a z) == 1$\\	
	&$\frac{3}{5} $&$ a b c (1 \oplus y) == 1$\\	
	\hline 
	6
	&$\frac{1}{3} $&$ (1 \oplus c) x y \oplus a (1 \oplus c \oplus b c \oplus b z) \oplus b (c \oplus y \oplus y z \oplus x y z) == 1$\\	
	&$\frac{2}{3} $&$ a b c x y z == 1$\\	
	\hline 
	7
	&$\frac{1}{4} $&$ x (c \oplus y) \oplus b (c \oplus y \oplus z \oplus x z) \oplus a (c \oplus y \oplus b z \oplus b c x y z) == 1$\\	
	&$\frac{3}{4} $&$ a b c x y z == 1$\\	
	&$\frac{1}{2} $&$ a b (1 \oplus c \oplus y \oplus z \oplus y z \oplus c x y z) == 1$\\	
	\hline 
	8
	&$\frac{1}{4} $&$ c x \oplus y \oplus c y \oplus x y \oplus x z \oplus x y z \oplus b (c \oplus y \oplus x z) \oplus a (y \oplus z \oplus b z \oplus y z \oplus c (1 \oplus b (x \oplus y) z)) == 1$\\	
	&$\frac{3}{4} $&$ a b c (x \oplus y) z == 1$\\	
	&$\frac{1}{2} $&$ a b (1 \oplus c \oplus y \oplus z \oplus c x z \oplus y z \oplus c y z) == 1$\\	
	\hline 
	9
	&$\frac{1}{3} $&$ x y (c \oplus z) \oplus b (c \oplus x \oplus x y \oplus z \oplus y z \oplus x y z) \oplus a (b (1 \oplus c) \oplus y (c \oplus z)) == 1$\\	
	&$\frac{2}{3} $&$ a b c (x \oplus x y \oplus z \oplus y z \oplus x y z) == 1$\\	
	\hline 
	10
	&$\frac{1}{3} $&$ x (c \oplus y \oplus z) \oplus a (b \oplus c \oplus b c \oplus y \oplus z) \oplus b (c \oplus y z \oplus x (y \oplus z \oplus y z)) == 1$\\	
	&$\frac{2}{3} $&$ a b c x (y \oplus z) == 1$\\	
	\hline 
	11
	&$\frac{1}{3} $&$ c x y \oplus a (y \oplus z) \oplus b (1 \oplus a \oplus c \oplus a c \oplus y \oplus z \oplus x y z) == 1$\\	
	&$\frac{2}{3} $&$ a (1 \oplus b) c (1 \oplus y \oplus z \oplus x y z) == 1$\\	
	\hline 
	12
	&$\frac{1}{4} $&$ (a \oplus x) y (c \oplus z) \oplus b (1 \oplus a \oplus c x \oplus y \oplus x z \oplus a c x y z) == 1$\\	
	&$\frac{3}{4} $&$ a b c x y z == 1$\\	
	&$\frac{1}{2} $&$ b c (1 \oplus x) y \oplus a c (1 \oplus b \oplus y \oplus b x y z) == 1$\\	
	\hline 
	13
	&$\frac{1}{3} $&$ c x (1 \oplus y) \oplus a (c \oplus y z) \oplus b (1 \oplus a (1 \oplus c) \oplus y z \oplus x (1 \oplus y) (1 \oplus z)) == 1$\\	
	&$\frac{2}{3} $&$ (1 \oplus a) b c y z == 1$\\	
	\hline 
	14
	&$\frac{1}{3} $&$ x y (c \oplus z) \oplus a (b \oplus c \oplus b c \oplus y z) \oplus b (c \oplus x \oplus x y \oplus z \oplus y z \oplus x y z) == 1$\\	
	&$\frac{2}{3} $&$ a b c x y z == 1$\\	
	\hline 
	15
	&$\frac{1}{3} $&$ y \oplus c y \oplus b (c \oplus y \oplus x z \oplus x y z) \oplus a (1 \oplus c \oplus b (1 \oplus c \oplus z)) == 1$\\	
	&$\frac{2}{3} $&$ a b c y (1 \oplus z) == 1$\\	
	\hline 
	16
	&$\frac{1}{4} $&$ (a \oplus x) y (c \oplus z) \oplus b (1 \oplus a \oplus y \oplus a c x y z \oplus x (1 \oplus c \oplus y \oplus y z)) == 1$\\	
	&$\frac{3}{4} $&$ a b c x y z == 1$\\	
	&$\frac{1}{2} $&$ b c (1 \oplus x) y \oplus a c (1 \oplus b \oplus y \oplus b x y z) == 1$\\	
	\hline 
	17
	&$\frac{1}{5} $&$ b (c \oplus x \oplus c x \oplus y \oplus z \oplus x y z \oplus a (1 \oplus c y \oplus z)) == 1$\\	
	&$\frac{4}{5} $&$ a (1 \oplus b) c (1 \oplus x) y (1 \oplus z) == 1$\\	
	&$\frac{2}{5} $&$ a y z \oplus b (a y \oplus (1 \oplus a) x (1 \oplus y)) z \oplus c x ((1 \oplus a) b z \oplus y (1 \oplus b \oplus b z \oplus a b z)) == 1$\\	
	&$\frac{3}{5} $&$ a (1 \oplus b) c (1 \oplus y) == 1$\\	
	\hline 
	18
	&$\frac{1}{3} $&$ (1 \oplus c) x y \oplus a (1 \oplus c \oplus b c \oplus z \oplus b z \oplus y z) \oplus b (c \oplus y \oplus y z \oplus x y z) == 1$\\	
	&$\frac{2}{3} $&$ a b c x y z == 1$\\	
	\hline 
	19
	&$\frac{3}{5} $&$ (1 \oplus a) b c (1 \oplus y) == 1$\\	
	&$\frac{2}{5} $&$ (1 \oplus b) y (1 \oplus c \oplus z) \oplus a (1 \oplus c \oplus y z \oplus b (1 \oplus c \oplus c x y \oplus y z \oplus c x y z)) == 1$\\	
	&$\frac{1}{5} $&$ y ((b \oplus x) (c \oplus z) \oplus a (c \oplus b c \oplus z)) == 1$\\	
	\hline 
	20
	&$\frac{1}{2} $&$ (b c \oplus a (1 \oplus b \oplus c)) (1 \oplus y) == 1$\\	
	&$\frac{1}{4} $&$ y (a \oplus b \oplus x (c \oplus z)) == 1$\\	
	\hline 
	21
	&$\frac{1}{2} $&$ (a c \oplus b (1 \oplus a \oplus c)) (1 \oplus y) == 1$\\	
	&$\frac{1}{4} $&$ y (a \oplus b \oplus c x \oplus b x z) == 1$\\	
	\hline 
	22
	&$\frac{1}{2} $&$ b (c \oplus c y \oplus x y z \oplus c x y z) \oplus a (1 \oplus c \oplus y \oplus c y \oplus c x y z \oplus b (1 \oplus y \oplus x y z)) == 1$\\	
	&$\frac{1}{4} $&$ y (a \oplus b \oplus c \oplus c x \oplus a z \oplus b x z) == 1$\\	
	\hline 
	23
	&$\frac{1}{2} $&$ b (1 \oplus c) x y \oplus a (c x y (1 \oplus z) \oplus b (1 \oplus c \oplus c x y \oplus c x y z)) == 1$\\	
	&$\frac{1}{4} $&$ c (x \oplus y) \oplus b (c \oplus c x \oplus z \oplus y z) \oplus a (c (1 \oplus y) \oplus (b \oplus y) z) == 1$\\	
	\hline 
	24
	&$\frac{1}{2} $&$ (1 \oplus b) (1 \oplus c) x y \oplus a (1 \oplus c \oplus b (1 \oplus c \oplus c x y \oplus c x y z)) == 1$\\	
	&$\frac{1}{4} $&$ c x (1 \oplus y) \oplus a (c \oplus c y \oplus b z) \oplus b (c \oplus c x \oplus z \oplus y z) == 1$\\	
	\hline 
	25
	&$\frac{1}{2} $&$ x y \oplus c x y \oplus b x (c \oplus y \oplus c z \oplus c y z) \oplus a (1 \oplus c \oplus b (1 \oplus c \oplus c x \oplus c x z)) == 1$\\	
	&$\frac{1}{4} $&$ b (c (1 \oplus x) \oplus (a \oplus x \oplus y) z) == 1$\\	
	\hline 
	26
	&$\frac{1}{4} $&$ c x y \oplus b (1 \oplus c \oplus y \oplus a (1 \oplus c \oplus c y)) \oplus a y (1 \oplus z) == 1$\\	
	&$\frac{3}{4} $&$ a (1 \oplus b) c (1 \oplus y) == 1$\\	
	&$\frac{1}{2} $&$ a c y ((1 \oplus x) z \oplus b (x \oplus z)) == 1$\\	
	\hline 
	27
	&$\frac{1}{3} $&$ x y (c \oplus z) \oplus a (b \oplus c \oplus b c \oplus z) \oplus b (c \oplus y (1 \oplus x \oplus z)) == 1$\\	
	&$\frac{2}{3} $&$ a b c x y z == 1$\\	
	\hline 
	28
	&$\frac{1}{4} $&$ y \oplus c y \oplus a (1 \oplus c \oplus b z \oplus b c x y z) \oplus b (y z \oplus x (1 \oplus c \oplus z)) == 1$\\	
	&$\frac{3}{4} $&$ a b c x y z == 1$\\	
	&$\frac{1}{2} $&$ b c (1 \oplus a \oplus x \oplus y \oplus x y \oplus a x y z) == 1$\\	
	\hline 
	29
	&$\frac{1}{3} $&$ a (1 \oplus b \oplus c \oplus b c \oplus y) \oplus y (b \oplus c x \oplus b x z) == 1$\\	
	&$\frac{2}{3} $&$ (1 \oplus a) b c (1 \oplus y) == 1$\\	
	\hline 
	30
	&$\frac{1}{3} $&$ x y (c \oplus z) \oplus a (b \oplus c \oplus b c \oplus z) \oplus b (c \oplus x \oplus y \oplus x z \oplus y z \oplus x y z) == 1$\\	
	&$\frac{2}{3} $&$ a b c x y z == 1$\\	
	\hline 
	31
	&$\frac{1}{4} $&$ c \oplus b x \oplus b c x \oplus y \oplus b x y \oplus z \oplus b z \oplus a (1 \oplus c \oplus y \oplus z \oplus b z \oplus b c x y z) == 0$\\	
	&$\frac{3}{4} $&$ a b c x y z == 1$\\	
	&$\frac{1}{2} $&$ a (b y (1 \oplus z) \oplus c (1 \oplus b \oplus y \oplus b x y z)) == 1$\\	
	\hline 
	32
	&$\frac{1}{3} $&$ c x (1 \oplus y) \oplus b (c \oplus y) \oplus a (b \oplus c \oplus b c \oplus c y) == 1$\\	
	&$\frac{2}{3} $&$ a (1 \oplus b) c y == 1$\\	
	\hline 
	33
	&$\frac{1}{3} $&$ c x (1 \oplus y) \oplus a (c \oplus y z) \oplus b (1 \oplus a \oplus a c \oplus y z) == 1$\\	
	&$\frac{2}{3} $&$ (1 \oplus a) b c y z == 1$\\	
	\hline 
	34
	&$\frac{3}{5} $&$ a (1 \oplus b) y (c \oplus z) == 1$\\	
	&$\frac{2}{5} $&$ b x (c \oplus c y \oplus z \oplus y z \oplus c y z) \oplus a (b (1 \oplus y) z \oplus c (1 \oplus y \oplus b x y z)) == 1$\\	
	&$\frac{1}{5} $&$ b (1 \oplus c \oplus x \oplus c x \oplus y \oplus x y z \oplus a (1 \oplus c \oplus c y \oplus y z)) == 1$\\	
	\hline 
	35
	&$\frac{1}{2} $&$ b (1 \oplus x) (c \oplus y \oplus y z \oplus c y z) \oplus a (c y \oplus b (1 \oplus z \oplus c z \oplus x (1 \oplus c \oplus z \oplus c y z))) == 1$\\	
	&$\frac{1}{4} $&$ b ((a \oplus y) z \oplus x (1 \oplus c \oplus a c z \oplus a c y z)) == 1$\\	
	&$\frac{3}{4} $&$ a b c x (1 \oplus y) z == 1$\\	
	\hline 
	36
	&$\frac{1}{2} $&$ (1 \oplus a) c y \oplus b ((1 \oplus a \oplus y \oplus a x y) z \oplus c (1 \oplus a \oplus a y z \oplus a x y z)) == 1$\\	
	&$\frac{1}{4} $&$ b x y \oplus c x (1 \oplus b \oplus y) \oplus a (b \oplus c \oplus c y \oplus b z) == 1$\\	
	\hline 
	37
	&$\frac{1}{2} $&$ b (1 \oplus x) (y \oplus c (1 \oplus z \oplus y z)) \oplus a ((1 \oplus c) (1 \oplus y) \oplus b (1 \oplus x y z \oplus c (y \oplus z \oplus y z \oplus x (1 \oplus y \oplus z)))) == 1$\\	
	&$\frac{1}{4} $&$ c (a \oplus x) y \oplus b ((1 \oplus a \oplus y \oplus a x y) z \oplus c (x \oplus a x y z)) == 1$\\	
	&$\frac{3}{4} $&$ a b (1 \oplus c) x y z == 1$\\	
	\hline 
	38
	&$\frac{1}{2} $&$ b c x (y \oplus z) \oplus a ((1 \oplus b \oplus y \oplus b x y) z \oplus c (1 \oplus b \oplus y \oplus b x z)) == 1$\\	
	&$\frac{1}{4} $&$ a (b \oplus c y \oplus b z) \oplus b (c \oplus x \oplus c x \oplus y \oplus x z \oplus y z \oplus x y z) == 1$\\	
	\hline 
	39
	&$\frac{1}{2} $&$ b x (c \oplus y) z \oplus a ((1 \oplus b \oplus y) z \oplus c (1 \oplus b \oplus y \oplus b x y \oplus b x z)) == 1$\\	
	&$\frac{1}{4} $&$ a (b \oplus c y \oplus b z) \oplus b (c (1 \oplus x) \oplus (x \oplus y \oplus x y) (1 \oplus z)) == 1$\\	
	\hline 
	40
	&$\frac{1}{2} $&$ c (a \oplus x) y \oplus b ((a \oplus x \oplus a y \oplus x y \oplus a x y) z \oplus c (a (y \oplus z \oplus y z) \oplus x (1 \oplus a (1 \oplus y \oplus z)))) == 1$\\	
	&$\frac{1}{4} $&$ a (b \oplus c \oplus c y \oplus b z) \oplus b (1 \oplus x) (c \oplus y z) == 1$\\	
	\hline 
	41
	&$\frac{1}{2} $&$ b (1 \oplus x) (c \oplus y \oplus y z) \oplus a (c (1 \oplus x) y (1 \oplus z) \oplus b (1 \oplus z \oplus c (x \oplus z \oplus x y z))) == 1$\\	
	&$\frac{1}{4} $&$ c x y \oplus a y z \oplus b ((a \oplus x \oplus x y) z \oplus c x (1 \oplus a (1 \oplus y) z)) == 1$\\	
	&$\frac{3}{4} $&$ a b c x (1 \oplus y) z == 1$\\	
	\hline 
	42
	&$\frac{1}{2} $&$ b (1 \oplus c) x (1 \oplus y) z \oplus a (b (x \oplus y \oplus x y) z \oplus c (1 \oplus b \oplus y \oplus b x y \oplus b x z)) == 1$\\	
	&$\frac{1}{4} $&$ c x y \oplus a (b \oplus c y \oplus b z) \oplus b (c \oplus x \oplus c x \oplus y \oplus z \oplus x y z) == 1$\\	
	\hline 
	43
	&$\frac{1}{2} $&$ b (1 \oplus x) (c \oplus y \oplus y z \oplus c y z) \oplus a (c y \oplus b ((1 \oplus x y) (1 \oplus z) \oplus c (x \oplus z \oplus x y z))) == 1$\\	
	&$\frac{1}{4} $&$ b ((a \oplus y) z \oplus x (y \oplus z \oplus y z) \oplus c x (1 \oplus a (1 \oplus y) z)) == 1$\\	
	&$\frac{3}{4} $&$ a b c x (1 \oplus y) z == 1$\\	
	\hline 
	44
	&$\frac{1}{3} $&$ c x \oplus y \oplus c y \oplus x y \oplus x z \oplus y z \oplus a (b \oplus c \oplus b c \oplus y \oplus z) \oplus b (c \oplus x z \oplus y (1 \oplus x \oplus z)) == 1$\\	
	&$\frac{2}{3} $&$ a b c (x z \oplus y (1 \oplus x \oplus z)) == 1$\\	
	\hline 
	45
	&$\frac{1}{3} $&$ x y (c \oplus z) \oplus a (1 \oplus b \oplus c \oplus b c \oplus y \oplus z) \oplus b (c \oplus y (1 \oplus x \oplus z)) == 1$\\	
	&$\frac{2}{3} $&$ a b c x y z == 1$\\	
	\hline 
	46
	&$\frac{1}{3} $&$ y (c \oplus z) \oplus a (c \oplus y z) \oplus b (1 \oplus a \oplus a c \oplus c x \oplus y \oplus x z \oplus y z \oplus x y z) == 1$\\	
	&$\frac{2}{3} $&$ a b c x y z == 1$\\	
	\hline 
	47
	&$\frac{1}{2} $&$ b c x y z \oplus a (y (c \oplus z) \oplus b (y z \oplus c (1 \oplus z \oplus y z \oplus x (1 \oplus y \oplus z)))) == 1$\\	
	&$\frac{1}{4} $&$ c (1 \oplus b \oplus b x \oplus y) \oplus a (1 \oplus b \oplus y \oplus b z) \oplus x (b y \oplus z \oplus y z \oplus b y z) == 1$\\	
	\hline 
	48
	&$\frac{1}{2} $&$ b (1 \oplus c) x (1 \oplus y) z \oplus a (b x (1 \oplus y) z \oplus c (1 \oplus y z \oplus x y (1 \oplus z) \oplus b (1 \oplus y z \oplus x (y \oplus z)))) == 1$\\	
	&$\frac{1}{4} $&$ c x y \oplus a (b \oplus b z \oplus y z) \oplus b (c \oplus x \oplus c x \oplus y \oplus z \oplus x y z) == 1$\\	
	\hline 
	49
	&$\frac{1}{2} $&$ a c (x z \oplus y (1 \oplus x \oplus z)) \oplus b (a c x z \oplus y ((1 \oplus x) (1 \oplus c \oplus z) \oplus a (1 \oplus z \oplus c (x \oplus z)))) == 1$\\	
	&$\frac{1}{4} $&$ c (1 \oplus x \oplus b x \oplus y) \oplus b x (1 \oplus y) z \oplus a (1 \oplus y \oplus z \oplus b z) == 1$\\	
	\hline 
	50
	&$\frac{1}{2} $&$ b c (1 \oplus x) y z \oplus a (y (c \oplus z) \oplus b (y z \oplus c (x \oplus y \oplus x y \oplus x z \oplus y z))) == 1$\\	
	&$\frac{1}{4} $&$ b y \oplus b x y \oplus c (1 \oplus b x \oplus y) \oplus b z \oplus x z \oplus x y z \oplus b x y z \oplus a (1 \oplus b \oplus y \oplus b z) == 1$\\	
	\hline 
	51
	&$\frac{1}{2} $&$ b c x y \oplus a ((1 \oplus c) x y z \oplus b (x y z \oplus c (1 \oplus z \oplus y z \oplus x (1 \oplus y \oplus z)))) == 1$\\	
	&$\frac{1}{4} $&$ c (1 \oplus b \oplus b x \oplus x y) \oplus (1 \oplus b \oplus x \oplus b x y) z \oplus a (1 \oplus b \oplus b z \oplus y z) == 1$\\	
	\hline 
	52
	&$\frac{1}{2} $&$ b c x (1 \oplus y) z \oplus a ((1 \oplus b) (1 \oplus y) z \oplus c (1 \oplus y \oplus b (1 \oplus y z \oplus x (y \oplus z)))) == 1$\\	
	&$\frac{1}{4} $&$ y \oplus c y \oplus x y z \oplus a (b \oplus y \oplus b z) \oplus b (c (1 \oplus x) \oplus y \oplus x (1 \oplus y) (1 \oplus z)) == 1$\\	
	\hline 
	53
	&$\frac{1}{3} $&$ (1 \oplus c) x y \oplus a (1 \oplus c \oplus b (1 \oplus c \oplus z)) \oplus b (c \oplus y z \oplus x (y \oplus z)) == 1$\\	
	&$\frac{2}{3} $&$ a b c x y (1 \oplus z) == 1$\\	
	\hline 
	54
	&$\frac{2}{3} $&$ a c (1 \oplus y \oplus b (1 \oplus y \oplus y z \oplus x y z)) == 1$\\	
	&$\frac{1}{3} $&$ b x (1 \oplus c \oplus y) \oplus a (b (1 \oplus c) \oplus y (c \oplus z)) == 1$\\	
	\hline 
	55
	&$\frac{2}{3} $&$ a c (1 \oplus y \oplus b (1 \oplus y \oplus x y z)) == 1$\\	
	&$\frac{1}{3} $&$ (a \oplus x) y (c \oplus z) \oplus b (1 \oplus a \oplus c \oplus a c \oplus y \oplus x y z) == 1$\\	
	\hline 
	56
	&$\frac{1}{2} $&$ c (1 \oplus x) y \oplus a (b \oplus c y) \oplus b (c \oplus x y) == 1$\\	
	\hline 
	57
	&$\frac{1}{2} $&$ b x (c \oplus y) \oplus a (b \oplus c y) == 1$\\	
	\hline 
	58
	&$\frac{1}{2} $&$ b (c \oplus y z) \oplus a (1 \oplus b \oplus c \oplus y z) == 1$\\	
	\hline 
	59
	&$\frac{1}{2} $&$ b x (1 \oplus c \oplus y) \oplus a (b \oplus c \oplus y) == 1$\\	
	\hline 
	60
	&$\frac{1}{3} $&$ x y (c \oplus z) \oplus a (b \oplus c \oplus b c \oplus y z) \oplus b (c \oplus x y z) == 1$\\	
	&$\frac{2}{3} $&$ a b c x y z == 1$\\	
	\hline 
	61
	&$\frac{1}{3} $&$ a y (c \oplus z) \oplus b (1 \oplus a \oplus a c \oplus c x \oplus x y \oplus z \oplus x z) == 1$\\	
	&$\frac{2}{3} $&$ b c ((1 \oplus x) z \oplus a (z \oplus x (1 \oplus y \oplus z))) == 1$\\	
	\hline 
	62
	&$\frac{1}{3} $&$ x (1 \oplus y) (c \oplus z) \oplus a (b (1 \oplus c) \oplus (1 \oplus y) (c \oplus z)) \oplus b (c \oplus y z \oplus x (y \oplus z)) == 1$\\	
	&$\frac{2}{3} $&$ a b c (y z \oplus x (y \oplus z)) == 1$\\	
	\hline 
	63
	&$\frac{1}{3} $&$ (1 \oplus c) x y \oplus a (1 \oplus c \oplus b c \oplus z \oplus b z \oplus y z) \oplus b (c \oplus y \oplus x z \oplus y z) == 1$\\	
	&$\frac{2}{3} $&$ a b c x y z == 1$\\	
	\hline 
	64
	&$\frac{1}{3} $&$ x y (c \oplus z) \oplus a (b \oplus c \oplus b c \oplus y z) \oplus b (c \oplus x \oplus x y \oplus z \oplus x z \oplus y z) == 1$\\	
	&$\frac{2}{3} $&$ a b c x y z == 1$\\	
	\hline 
	65
	&$\frac{1}{3} $&$ (a \oplus x) y (c \oplus z) \oplus b (1 \oplus a \oplus c \oplus a c \oplus y \oplus x y z) == 1$\\	
	&$\frac{2}{3} $&$ a b c (1 \oplus y \oplus x y z) == 1$\\	
	\hline 
	66
	&$\frac{1}{3} $&$ x y (c \oplus z) \oplus a (b \oplus c \oplus b c \oplus y z) \oplus b (c \oplus x (1 \oplus y \oplus z)) == 1$\\	
	&$\frac{2}{3} $&$ a b c x y z == 1$\\	
	\hline 
	67
	&$\frac{1}{3} $&$ (a \oplus x) y (c \oplus z) \oplus b (1 \oplus a \oplus c \oplus a c \oplus y \oplus x z) == 1$\\	
	&$\frac{2}{3} $&$ a b c (1 \oplus y \oplus x z) == 1$\\	
	\hline 
	68
	&$\frac{1}{3} $&$ x y (c \oplus z) \oplus b (1 \oplus a \oplus c \oplus a c \oplus y \oplus x z) \oplus a (c \oplus y z) == 1$\\	
	&$\frac{2}{3} $&$ a b c x y z == 1$\\	
	\hline 
	69
	&$\frac{1}{3} $&$ (1 \oplus c) x y \oplus a (1 \oplus c \oplus b c \oplus z \oplus b z \oplus y z) \oplus b (c \oplus x \oplus y \oplus x y \oplus x z \oplus y z) == 1$\\	
	&$\frac{2}{3} $&$ a b c x y z == 1$\\	
	\hline 
	70
	&$\frac{1}{2} $&$ a (b \oplus c \oplus c y) \oplus b x (1 \oplus c \oplus y z) == 1$\\	
	\hline 
	71
	&$\frac{1}{2} $&$ b x (1 \oplus c \oplus y z) \oplus a (b \oplus c \oplus y z) == 1$\\	
	\hline 
	72
	&$\frac{1}{2} $&$ b x (c \oplus y z) \oplus a (c \oplus (1 \oplus b \oplus y) z) == 1$\\	
	\hline 
	73
	&$\frac{1}{2} $&$ c y \oplus a (b \oplus c y) \oplus b (c \oplus x (1 \oplus y) z) == 1$\\	
	\hline 
	74
	&$\frac{1}{2} $&$ a (b \oplus c y) \oplus b (c \oplus y \oplus x z \oplus x y z) == 1$\\	
	\hline 
	75
	&$\frac{1}{2} $&$ a c (1 \oplus y) \oplus b (1 \oplus a \oplus c \oplus x y \oplus y z) == 1$\\	
	\hline 
	76
	&$\frac{1}{2} $&$ x y (c \oplus z) \oplus a (c \oplus y z) \oplus b (1 \oplus a \oplus c \oplus x y \oplus y z) == 1$\\	
	\hline 
	77
	&$\frac{1}{2} $&$ b (c \oplus y) \oplus a (1 \oplus b \oplus c \oplus y) == 1$\\	
	\hline 
	78
	&$\frac{1}{2} $&$ b (c \oplus y) \oplus a (b \oplus c y) == 1$\\	
	\hline 
	79
	&$\frac{1}{2} $&$ a (c x y z \oplus b (1 \oplus y \oplus x y z)) \oplus b (x (1 \oplus y \oplus z) \oplus c (1 \oplus y \oplus x y z)) == 1$\\	
	&$\frac{1}{4} $&$ y (a \oplus b \oplus c \oplus c x \oplus a z \oplus b x z) == 1$\\	
	\hline 
	80
	&$\frac{1}{2} $&$ b (c \oplus c y \oplus x z \oplus c x y z) \oplus a (c x y z \oplus b (1 \oplus y \oplus x y z)) == 1$\\	
	&$\frac{1}{4} $&$ y (a \oplus b \oplus c \oplus c x \oplus a z \oplus b x z) == 1$\\	
	\hline 
	81
	&$\frac{1}{2} $&$ b (c \oplus c y \oplus x y z \oplus c x y z) \oplus a (c x y z \oplus b (1 \oplus y \oplus x y z)) == 1$\\	
	&$\frac{1}{4} $&$ y (a \oplus b \oplus c \oplus c x \oplus a z \oplus b x z) == 1$\\	
	\hline 
	82
	&$\frac{1}{3} $&$ b x (1 \oplus c \oplus y) \oplus a (b (1 \oplus c) \oplus y (c \oplus z)) == 1$\\	
	&$\frac{2}{3} $&$ a b c (1 \oplus y (1 \oplus z \oplus x z)) == 1$\\	
	\hline 
	83
	&$\frac{1}{2} $&$ b (a \oplus c \oplus y (x \oplus z)) == 1$\\	
	\hline 
	84
	&$\frac{1}{2} $&$ b (a \oplus c \oplus y \oplus x y z) == 1$\\	
	\hline 
	85
	&$\frac{1}{2} $&$ b (a \oplus c \oplus x y \oplus x z \oplus y z) == 1$\\	
	\hline 
	86
	&$\frac{1}{2} $&$ b (a \oplus c \oplus (x \oplus y) z) == 1$\\	
	\hline 
	87
	&$\frac{1}{2} $&$ b (a \oplus c \oplus y z) == 1$\\	
	\hline 
	88
	&$\frac{1}{2} $&$ b (a \oplus c \oplus y \oplus x z) == 1$\\	
	\hline 
	89
	&$\frac{1}{2} $&$ b (a \oplus c \oplus y) == 1$\\	
	\hline 
	90
	&$\frac{1}{2} $&$ b (a \oplus c \oplus x y z) == 1$\\	
	\hline 
	91
	&$\frac{2}{3} $&$ a b (1 \oplus c) (1 \oplus y) == 1$\\	
	&$\frac{1}{3} $&$ a y (c \oplus z) \oplus b (c \oplus a c \oplus y \oplus x y z) == 1$\\	
	\hline 
	92
	&$\frac{2}{3} $&$ a b c y == 1$\\	
	&$\frac{1}{3} $&$ c x (1 \oplus y) \oplus b (c \oplus y) \oplus a (b \oplus c \oplus b c \oplus c y) == 1$\\	
	\hline 
	93
	&$\frac{1}{2} $&$ b (a \oplus x (c \oplus y z)) == 1$\\	
	\hline 
	94
	&$\frac{1}{2} $&$ b (a \oplus x (c \oplus y)) == 1$\\	
	\hline 
	95
	&$\frac{1}{2} $&$ b (1 \oplus y) (a \oplus c \oplus x z) == 1$\\	
	&$\frac{1}{4} $&$ y (a \oplus b \oplus c x \oplus b x z) == 1$\\	
	\hline 
	96
	&$\frac{1}{2} $&$ b (1 \oplus y) (a \oplus c \oplus x z) == 1$\\	
	&$\frac{1}{4} $&$ y (a \oplus b \oplus c \oplus a z) == 1$\\	
	\hline 
	97
	&$\frac{1}{2} $&$ b (1 \oplus y) (a \oplus c \oplus x z) == 1$\\	
	&$\frac{1}{4} $&$ y (1 \oplus b \oplus c \oplus a z \oplus b x z) == 1$\\	
	\hline 
	98
	&$\frac{1}{2} $&$ b (1 \oplus y) (a \oplus c \oplus x z) == 1$\\	
	&$\frac{1}{4} $&$ (a \oplus b \oplus c x) y == 1$\\	
	\hline 
	99
	&$\frac{2}{3} $&$ a b c y (x \oplus z \oplus x z) == 1$\\	
	&$\frac{1}{3} $&$ b (c \oplus x \oplus x z \oplus y z) \oplus a (b (1 \oplus c) \oplus (1 \oplus y) (c \oplus z)) == 1$\\	
	\hline 
	100
	&$\frac{2}{3} $&$ a b c y (1 \oplus x z) == 1$\\	
	&$\frac{1}{3} $&$ a (b (1 \oplus c) \oplus (1 \oplus y) (c \oplus z)) \oplus b (c \oplus y \oplus x (1 \oplus y \oplus z)) == 1$\\	
	\hline 
	101
	&$\frac{1}{2} $&$ b (a \oplus c) (1 \oplus y) == 1$\\	
	&$\frac{1}{4} $&$ y (b \oplus c \oplus (a \oplus x) z) == 1$\\	
	\hline 
	102
	&$\frac{1}{2} $&$ b (a \oplus c) (1 \oplus y) == 1$\\	
	&$\frac{1}{4} $&$ y (a \oplus b \oplus x \oplus c x \oplus b x z) == 1$\\	
	\hline 
	103
	&$\frac{2}{3} $&$ a b c y (1 \oplus x z) == 1$\\	
	&$\frac{1}{3} $&$ b (c \oplus y \oplus x z) \oplus a (b (1 \oplus c) \oplus (1 \oplus y) (1 \oplus c \oplus z)) == 1$\\	
	\hline 
	104
	&$\frac{3}{4} $&$ a b c (1 \oplus y) == 1$\\	
	&$\frac{1}{4} $&$ c (1 \oplus x) y \oplus b (1 \oplus c \oplus y \oplus a (1 \oplus c \oplus c y)) \oplus a y z == 1$\\	
	&$\frac{1}{2} $&$ a c y (b z \oplus x (1 \oplus b \oplus z)) == 1$\\	
	\hline 
	105
	&$\frac{2}{3} $&$ a b c y (1 \oplus x z) == 1$\\	
	&$\frac{1}{3} $&$ c x (1 \oplus y) \oplus a (b \oplus c \oplus b c \oplus c y) \oplus b (c \oplus y \oplus x y z) == 1$\\	
	\hline 
	106
	&$\frac{3}{5} $&$ a b (1 \oplus c) (1 \oplus y) == 1$\\	
	&$\frac{2}{5} $&$ a c (1 \oplus x) y z \oplus b ((1 \oplus a \oplus x) y \oplus c (1 \oplus a \oplus x y \oplus a y z \oplus a x y z)) == 1$\\	
	&$\frac{1}{5} $&$ x y (1 \oplus b \oplus c \oplus z) \oplus a y (1 \oplus b \oplus b c \oplus z) == 1$\\	
	\hline 
	107
	&$\frac{2}{3} $&$ a b c y (1 \oplus x z) == 1$\\	
	&$\frac{1}{3} $&$ c x (1 \oplus y) \oplus a (1 \oplus b \oplus b c \oplus y) \oplus b (c \oplus y \oplus x y z) == 1$\\	
	\hline 
	108
	&$\frac{2}{3} $&$ a b c (1 \oplus y) (1 \oplus x z) == 1$\\	
	&$\frac{1}{3} $&$ c (a \oplus x) y \oplus b (1 \oplus a \oplus c \oplus a c \oplus x y \oplus x z \oplus y z) == 1$\\	
	\hline 
	109
	&$\frac{1}{3} $&$ a (1 \oplus c \oplus b c \oplus y z) \oplus b (c \oplus z \oplus x (y \oplus z)) == 1$\\	
	\hline 
	110
	&$\frac{1}{3} $&$ a (1 \oplus c \oplus b c \oplus y z) \oplus b (c \oplus y z \oplus x (y \oplus z)) == 1$\\	
	\hline 
	111
	&$\frac{1}{3} $&$ a (1 \oplus c \oplus b c \oplus y z) \oplus b (c \oplus x \oplus x z \oplus y z) == 1$\\	
	\hline 
	112
	&$\frac{1}{3} $&$ a (b \oplus c \oplus b c) \oplus c x y \oplus b (c \oplus (x \oplus y) z) == 1$\\	
	\hline 
	113
	&$\frac{1}{3} $&$ a (b \oplus c \oplus b c \oplus y \oplus y z) \oplus b (c \oplus (x \oplus y) z) == 1$\\	
	\hline 
	114
	&$\frac{1}{3} $&$ b (c \oplus y \oplus x z) \oplus a (1 \oplus c \oplus b c \oplus y \oplus y z) == 1$\\	
	\hline 
	115
	&$\frac{1}{3} $&$ a (b \oplus c \oplus b c) \oplus c x y \oplus b (c \oplus y z \oplus x (1 \oplus y \oplus z)) == 1$\\	
	\hline 
	116
	&$\frac{1}{3} $&$ b (c \oplus x \oplus x y) \oplus a (b \oplus c \oplus b c \oplus y z) == 1$\\	
	\hline 
	117
	&$\frac{1}{3} $&$ a (1 \oplus c \oplus b c \oplus y z) \oplus b (c \oplus x \oplus y \oplus x y z) == 1$\\	
	\hline 
	118
	&$\frac{1}{3} $&$ a (1 \oplus c \oplus b c \oplus y \oplus y z) \oplus b (c \oplus y \oplus z \oplus x y z) == 1$\\	
	\hline 
	119
	&$\frac{1}{3} $&$ a (1 \oplus c \oplus b c \oplus y \oplus y z) \oplus b (c \oplus y \oplus x z \oplus x y z) == 1$\\	
	\hline 
	120
	&$\frac{1}{3} $&$ a (1 \oplus c \oplus b c \oplus y z) \oplus b (c \oplus y \oplus x z \oplus x y z) == 1$\\	
	\hline 
	121
	&$\frac{1}{3} $&$ a (b \oplus c \oplus b c \oplus z \oplus y z) \oplus b (c \oplus y (x \oplus z)) == 1$\\	
	\hline 
	122
	&$\frac{1}{3} $&$ a (b \oplus c \oplus b c) \oplus c x y \oplus b (c \oplus x \oplus y \oplus z \oplus x y z) == 1$\\	
	\hline 
	123
	&$\frac{1}{2} $&$ b (1 \oplus c) (a \oplus x y) == 1$\\	
	&$\frac{1}{4} $&$ c x (1 \oplus y) \oplus b (c \oplus c x \oplus z \oplus y z) \oplus a (c \oplus (b \oplus y) z) == 1$\\	
	\hline 
	124
	&$\frac{1}{2} $&$ b (1 \oplus a \oplus x) (c \oplus y z) == 1$\\	
	&$\frac{1}{4} $&$ b x (c \oplus y \oplus z) \oplus a (1 \oplus b \oplus c \oplus y \oplus b z) == 1$\\	
	\hline 
	125
	&$\frac{1}{3} $&$ a (b \oplus c \oplus b c) \oplus c x (1 \oplus y) \oplus b (c \oplus x y z) == 1$\\	
	\hline 
	126
	&$\frac{1}{3} $&$ a (b \oplus c \oplus b c \oplus y \oplus y z) \oplus b (c \oplus x (1 \oplus y) z) == 1$\\	
	\hline 
	127
	&$\frac{1}{3} $&$ a (b \oplus c \oplus b c) \oplus c x y \oplus b (c \oplus (1 \oplus x) y z) == 1$\\	
	\hline 
	128
	&$\frac{1}{2} $&$ b c (1 \oplus a \oplus x y) == 1$\\	
	&$\frac{1}{4} $&$ x y (b \oplus c \oplus z) \oplus a (b \oplus c \oplus y z) == 1$\\	
	\hline 
	129
	&$\frac{1}{2} $&$ b (1 \oplus c \oplus y \oplus c y z \oplus a (1 \oplus x y \oplus c (1 \oplus y (1 \oplus x \oplus z)))) == 1$\\	
	&$\frac{1}{4} $&$ b (y (x \oplus z) \oplus c (x \oplus x y z)) \oplus a ((1 \oplus b \oplus y \oplus b x y) z \oplus c (1 \oplus y \oplus b x y z)) == 1$\\	
	&$\frac{3}{4} $&$ (1 \oplus a) b (1 \oplus c) x y z == 1$\\	
	\hline 
	130
	&$\frac{1}{2} $&$ b ((1 \oplus a) (1 \oplus x) y z \oplus c (1 \oplus a \oplus x \oplus a x y \oplus a y z)) == 1$\\	
	&$\frac{1}{4} $&$ b x (c \oplus y \oplus z \oplus y z) \oplus a ((1 \oplus c) (1 \oplus y) \oplus b (1 \oplus c x y) (1 \oplus z)) == 1$\\	
	&$\frac{3}{4} $&$ a b c x y (1 \oplus z) == 1$\\	
	\hline 
	131
	&$\frac{1}{2} $&$ b ((1 \oplus a \oplus x) y z \oplus c (1 \oplus a \oplus x \oplus a x y \oplus a y z)) == 1$\\	
	&$\frac{1}{4} $&$ b x (c \oplus y \oplus z) \oplus a ((1 \oplus c) (1 \oplus y) \oplus b (1 \oplus c x y) (1 \oplus z)) == 1$\\	
	&$\frac{3}{4} $&$ a b c x y (1 \oplus z) == 1$\\	
	\hline 
	132
	&$\frac{1}{2} $&$ b c (1 \oplus a \oplus x \oplus y \oplus a y \oplus x y \oplus a x y z) == 1$\\	
	&$\frac{1}{4} $&$ a (1 \oplus c \oplus c y \oplus b z) \oplus y (c \oplus z \oplus x z) \oplus b (y \oplus x (1 \oplus c \oplus z \oplus y z)) == 1$\\	
	\hline 
	133
	&$\frac{1}{2} $&$ (1 \oplus a) b c (1 \oplus y \oplus x y z) == 1$\\	
	&$\frac{1}{4} $&$ y (c \oplus x \oplus x z) \oplus a (1 \oplus c \oplus c y \oplus z \oplus b z \oplus y z) \oplus b (x \oplus c x \oplus y \oplus y z \oplus x y z) == 1$\\	
	\hline 
	134
	&$\frac{1}{2} $&$ a b ((x \oplus y \oplus x y) z \oplus c (y \oplus x z \oplus x y z)) == 1$\\	
	&$\frac{1}{4} $&$ a \oplus c \oplus b c x \oplus y \oplus a c y \oplus b x y \oplus b z \oplus a b z \oplus x z \oplus x y z \oplus b x y z == 1$\\	
	\hline 
	135
	&$\frac{1}{2} $&$ a b c (y \oplus x z \oplus x y z) == 1$\\	
	&$\frac{1}{4} $&$ a \oplus c \oplus b c x \oplus y \oplus a c y \oplus b x y \oplus a b z \oplus x z \oplus b x z \oplus y z \oplus a y z \oplus b y z \oplus x y z \oplus b x y z == 1$\\	
	\hline 
	136
	&$\frac{1}{3} $&$ c x y \oplus a (b \oplus b c \oplus c y) \oplus b (c \oplus x \oplus y \oplus y z \oplus x y z) == 1$\\	
	&$\frac{2}{3} $&$ a b c x (1 \oplus y) == 1$\\	
	\hline 
	137
	&$\frac{1}{3} $&$ a (1 \oplus c \oplus b c \oplus y \oplus b z) \oplus b (c \oplus y \oplus x y z) == 1$\\	
	\hline 
	138
	&$\frac{1}{3} $&$ (1 \oplus c) x y \oplus a (1 \oplus c (1 \oplus b \oplus y) \oplus b z) \oplus b (c \oplus y \oplus y z \oplus x y z) == 1$\\	
	\hline 
	139
	&$\frac{1}{3} $&$ b (c \oplus x \oplus y \oplus y z \oplus x y z) \oplus a (b (1 \oplus c) \oplus y (c \oplus z)) == 1$\\	
	&$\frac{2}{3} $&$ a b c x (1 \oplus y) == 1$\\	
	\hline 
	140
	&$\frac{1}{4} $&$ a (c \oplus c y \oplus b z) \oplus b (c \oplus y \oplus z \oplus x y z) == 1$\\	
	&$\frac{1}{2} $&$ a (y (c \oplus z) \oplus b (1 \oplus c \oplus y \oplus y z)) == 1$\\	
	\hline 
	141
	&$\frac{1}{3} $&$ b (c \oplus x y) \oplus a (b (1 \oplus c) \oplus (1 \oplus y) (c \oplus z)) == 1$\\	
	&$\frac{2}{3} $&$ a b c x y == 1$\\	
	\hline 
	142
	&$\frac{1}{4} $&$ b (c \oplus x \oplus c x \oplus y \oplus z \oplus x z) \oplus a (b \oplus y (c \oplus z)) == 1$\\	
	&$\frac{1}{2} $&$ b c x y \oplus a c (1 \oplus b \oplus y) == 1$\\	
	\hline 
	143
	&$\frac{1}{3} $&$ b (c \oplus x \oplus y \oplus z \oplus x y z) \oplus a (b (1 \oplus c) \oplus y (c \oplus z)) == 1$\\	
	&$\frac{2}{3} $&$ a b c (1 \oplus y) (x \oplus z) == 1$\\	
	\hline 
	144
	&$\frac{1}{4} $&$ b (a \oplus c \oplus x \oplus c x \oplus y \oplus z \oplus x z) == 1$\\	
	&$\frac{1}{2} $&$ (1 \oplus b) c (a \oplus x y) == 1$\\	
	\hline 
	145
	&$\frac{1}{4} $&$ b (a \oplus c \oplus x \oplus c x \oplus y \oplus y z \oplus x y z) == 1$\\	
	&$\frac{1}{2} $&$ (1 \oplus b) c (a \oplus x y) == 1$\\	
	\hline 
	146
	&$\frac{1}{3} $&$ c (1 \oplus x) y \oplus b (c \oplus x y) \oplus a (b \oplus c \oplus b c \oplus z \oplus y z) == 1$\\	
	\hline 
	147
	&$\frac{1}{2} $&$ b (1 \oplus c) (x y \oplus a (1 \oplus (1 \oplus x) y z)) == 1$\\	
	&$\frac{1}{4} $&$ c x (1 \oplus y) \oplus b (c \oplus c x \oplus z \oplus x y z) \oplus a (c \oplus (b \oplus y) z) == 1$\\	
	\hline 
	148
	&$\frac{1}{2} $&$ b c (a \oplus x \oplus x y \oplus a y z \oplus a x y z) == 1$\\	
	&$\frac{1}{4} $&$ x y (c \oplus z) \oplus a (1 \oplus b \oplus c \oplus y \oplus b z \oplus y z) \oplus b (c \oplus c x \oplus y \oplus z \oplus x z \oplus x y z) == 1$\\	
	\hline 
	149
	&$\frac{1}{2} $&$ b c (a \oplus x \oplus x y \oplus a x y z) == 1$\\	
	&$\frac{1}{4} $&$ x y (c \oplus z) \oplus a (1 \oplus b \oplus c \oplus y \oplus b z \oplus y z) \oplus b (c \oplus c x \oplus y \oplus z \oplus x z \oplus y z \oplus x y z) == 1$\\	
	\hline 
	150
	&$\frac{1}{2} $&$ b (1 \oplus c) (a \oplus a x y z \oplus x y (1 \oplus z)) == 1$\\	
	&$\frac{1}{4} $&$ c x (1 \oplus y) \oplus b (c \oplus c x \oplus z \oplus y z \oplus x y z) \oplus a (c \oplus (b \oplus y) z) == 1$\\	
	\hline 
	151
	&$\frac{1}{2} $&$ b c (1 \oplus a \oplus y \oplus x y \oplus a y z \oplus a x y z) == 1$\\	
	&$\frac{1}{4} $&$ y (1 \oplus c \oplus z) \oplus a (1 \oplus c \oplus z \oplus b z) \oplus b x (1 \oplus c \oplus y z) == 1$\\	
	\hline 
	152
	&$\frac{1}{2} $&$ b c (1 \oplus a \oplus x \oplus y \oplus a y \oplus x y \oplus a x y z) == 1$\\	
	&$\frac{1}{4} $&$ a (1 \oplus c \oplus c y \oplus b z) \oplus y (1 \oplus c \oplus z \oplus x z) \oplus b x (1 \oplus c \oplus z \oplus y z) == 1$\\	
	\hline 
	153
	&$\frac{1}{2} $&$ b ((1 \oplus a \oplus x) y z \oplus c (1 \oplus a \oplus x \oplus a x y \oplus a x y z)) == 1$\\	
	&$\frac{1}{4} $&$ a (1 \oplus b \oplus c \oplus y \oplus b z) \oplus b x (c \oplus z \oplus y z) == 1$\\	
	\hline 
	154
	&$\frac{1}{2} $&$ b c ((1 \oplus x) (1 \oplus y) \oplus a (1 \oplus y \oplus x y \oplus x y z)) == 1$\\	
	&$\frac{1}{4} $&$ c y \oplus x y z \oplus a (1 \oplus b \oplus c \oplus c y \oplus b z) \oplus b (c x \oplus (x \oplus y \oplus x y) z) == 1$\\	
	\hline 
	155
	&$\frac{1}{2} $&$ a b (1 \oplus c \oplus y \oplus c x y \oplus y z \oplus c x y z) == 1$\\	
	&$\frac{1}{4} $&$ x (c \oplus y) \oplus a (c \oplus y \oplus b z) \oplus b (c \oplus c x \oplus y \oplus z \oplus x y z) == 1$\\	
	\hline 
	156
	&$\frac{1}{2} $&$ b c (1 \oplus a \oplus x \oplus y \oplus a y \oplus x y \oplus a x y z) == 1$\\	
	&$\frac{1}{4} $&$ c y \oplus x y z \oplus a (1 \oplus c \oplus c y \oplus b z) \oplus b (y (1 \oplus z) \oplus x (1 \oplus c \oplus z \oplus y z)) == 1$\\	
	\hline 
	157
	&$\frac{1}{2} $&$ a c (b x z \oplus y (1 \oplus b \oplus b x z)) == 1$\\	
	&$\frac{1}{4} $&$ a \oplus c \oplus x \oplus b x \oplus b c x \oplus y \oplus a c y \oplus c x y \oplus a b z \oplus x z \oplus b x z \oplus b y z \oplus x y z \oplus b x y z == 1$\\	
	\hline 
	158
	&$\frac{1}{3} $&$ c x y \oplus a (b \oplus c \oplus b c \oplus z \oplus y z) \oplus b (c \oplus y z \oplus x (1 \oplus y \oplus z)) == 1$\\	
	\hline 
	159
	&$\frac{1}{3} $&$ c (a \oplus y) \oplus b (1 \oplus a \oplus a c \oplus c x \oplus x y \oplus x z \oplus y z) == 1$\\	
	\hline 
	160
	&$\frac{1}{2} $&$ b c (1 \oplus a \oplus y \oplus a x y \oplus x y z \oplus a x y z) == 1$\\	
	&$\frac{1}{4} $&$ y (b \oplus c \oplus c x \oplus b x z) \oplus a (b \oplus c \oplus z \oplus y z) == 1$\\	
	\hline 
	161
	&$\frac{1}{2} $&$ b c (1 \oplus a \oplus y \oplus x y \oplus x y z \oplus a x y z) == 1$\\	
	&$\frac{1}{4} $&$ y (b \oplus c \oplus b x \oplus c x \oplus b x z) \oplus a (b \oplus c \oplus z \oplus y z) == 1$\\	
	\hline 
	162
	&$\frac{1}{2} $&$ b c (1 \oplus a \oplus x y \oplus a x y \oplus a x y z) == 1$\\	
	&$\frac{1}{4} $&$ x y (c \oplus b z) \oplus a (b \oplus c \oplus z \oplus y z) == 1$\\	
	\hline 
	163
	&$\frac{1}{2} $&$ b c (1 \oplus a \oplus x y \oplus a x y z) == 1$\\	
	&$\frac{1}{4} $&$ x y (b \oplus c \oplus b z) \oplus a (b \oplus c \oplus z \oplus y z) == 1$\\	
	\hline 
	164
	&$\frac{1}{2} $&$ a b (1 \oplus c \oplus y \oplus y z \oplus x y z \oplus c x y z) == 1$\\	
	&$\frac{1}{4} $&$ x (c \oplus y) \oplus a (c \oplus y \oplus b z) \oplus b (c \oplus c x \oplus y \oplus x y \oplus z \oplus x y z) == 1$\\	
	\hline 
	165
	&$\frac{1}{2} $&$ a b (1 \oplus c \oplus c x y \oplus z \oplus c y z) \oplus b x (1 \oplus c \oplus y \oplus z \oplus y z \oplus c y z) == 1$\\	
	&$\frac{1}{4} $&$ b (c \oplus c x \oplus y z) \oplus a ((1 \oplus b \oplus y) z \oplus c (1 \oplus y \oplus b y z \oplus b x y z)) == 1$\\	
	&$\frac{3}{4} $&$ a b c (1 \oplus x) y z == 1$\\	
	\hline 
	166
	&$\frac{3}{5} $&$ b c (a \oplus x) (1 \oplus y) == 1$\\	
	&$\frac{1}{5} $&$ a (1 \oplus b \oplus c \oplus y \oplus c y \oplus b c y \oplus b z) \oplus b (c \oplus c x \oplus y \oplus c x y \oplus z \oplus x z \oplus x y z) == 1$\\	
	&$\frac{2}{5} $&$ b x y (c \oplus z) \oplus a y (c \oplus b z) == 1$\\	
	\hline 
	167
	&$\frac{3}{5} $&$ a b c (1 \oplus y) == 1$\\	
	&$\frac{1}{5} $&$ (1 \oplus c) x (1 \oplus y) \oplus a (1 \oplus b \oplus c \oplus y \oplus c y \oplus b c y \oplus b z) \oplus b (c \oplus x \oplus c x \oplus y \oplus z \oplus x y z) == 1$\\	
	&$\frac{2}{5} $&$ b c x y \oplus a y (c \oplus z \oplus b z) == 1$\\	
	\hline 
	168
	&$\frac{2}{5} $&$ a y (1 \oplus c \oplus z) \oplus b ((1 \oplus y) (1 \oplus x z) \oplus c (1 \oplus y z \oplus x y (1 \oplus z)) \oplus a (1 \oplus y z \oplus c (1 \oplus x z \oplus y (1 \oplus x \oplus z)))) == 1$\\	
	&$\frac{1}{5} $&$ y (c x \oplus (b \oplus x) z) \oplus a (y z \oplus c (y \oplus b y \oplus b x z \oplus b x y z)) == 1$\\	
	&$\frac{3}{5} $&$ a b c (1 \oplus y) (1 \oplus x z) == 1$\\	
	\hline 
	169
	&$\frac{2}{5} $&$ (1 \oplus a) c x y (1 \oplus z) \oplus b ((1 \oplus y) (1 \oplus x z) \oplus a (1 \oplus c \oplus c x y \oplus c x z \oplus y z) \oplus c (1 \oplus y \oplus x y \oplus x y z)) == 1$\\	
	&$\frac{1}{5} $&$ (a \oplus b \oplus x) y z \oplus c (a b x z \oplus y (1 \oplus a b \oplus x \oplus a b x z)) == 1$\\	
	&$\frac{3}{5} $&$ a b c (1 \oplus y) (1 \oplus x z) == 1$\\	
	\hline 
	170
	&$\frac{2}{5} $&$ c (a \oplus x) y \oplus b (1 \oplus c \oplus y \oplus c y \oplus c x y \oplus x z \oplus x y z \oplus a (1 \oplus c \oplus y \oplus x y \oplus c x y \oplus c x z \oplus x y z)) == 1$\\	
	&$\frac{1}{5} $&$ (1 \oplus a \oplus x) y (c \oplus z) \oplus b (a c x z \oplus y (1 \oplus a c \oplus x \oplus a c x z)) == 1$\\	
	&$\frac{3}{5} $&$ a b c (1 \oplus y) (1 \oplus x z) == 1$\\	
	\hline 
	171
	&$\frac{2}{5} $&$ a (1 \oplus c) x y z \oplus b (1 \oplus c \oplus y \oplus c y \oplus c x y \oplus x z \oplus x y z \oplus a (1 \oplus c \oplus y \oplus c x z \oplus x y z)) == 1$\\	
	&$\frac{1}{5} $&$ (1 \oplus x) y (b \oplus c \oplus z) \oplus a (b c x z \oplus y (1 \oplus b c \oplus z \oplus b c x z)) == 1$\\	
	&$\frac{3}{5} $&$ a b c (1 \oplus y) (1 \oplus x z) == 1$\\	
	\hline 
	172
	&$\frac{2}{5} $&$ a (b \oplus y \oplus c y \oplus b c y \oplus b c x y z) \oplus b (x (1 \oplus y) z \oplus c (1 \oplus y \oplus x y z)) == 1$\\	
	&$\frac{1}{5} $&$ y (b \oplus c \oplus c x \oplus b z) \oplus a (1 \oplus c \oplus y \oplus z \oplus y z \oplus b (1 \oplus c \oplus z)) == 1$\\	
	\hline 
	173
	&$\frac{2}{5} $&$ a (1 \oplus c) y \oplus b ((1 \oplus y) (1 \oplus x z) \oplus c (1 \oplus y z \oplus x y (1 \oplus z)) \oplus a (1 \oplus c y (x \oplus z \oplus x z))) == 1$\\	
	&$\frac{1}{5} $&$ c x y \oplus b y z \oplus a ((1 \oplus b) c \oplus (1 \oplus b \oplus y) z) == 1$\\	
	\hline 
	174
	&$\frac{2}{5} $&$ (1 \oplus b) (c \oplus y \oplus x z \oplus x y z) \oplus a (1 \oplus c y \oplus b (1 \oplus c y (1 \oplus z \oplus x z))) == 1$\\	
	&$\frac{1}{5} $&$ b x (1 \oplus c \oplus y) \oplus a (b (1 \oplus c) \oplus y (c \oplus z)) == 1$\\	
	\hline 
	175
	&$\frac{2}{5} $&$ b (c \oplus c x y \oplus (x \oplus y) z) \oplus a (c (1 \oplus x) y z \oplus b (1 \oplus y (1 \oplus c \oplus z \oplus c z \oplus c x z))) == 1$\\	
	&$\frac{1}{5} $&$ (b \oplus c) x y \oplus a (y \oplus c (1 \oplus b \oplus y) \oplus z \oplus b z) == 1$\\	
	\hline 
	176
	&$\frac{2}{5} $&$ b (x (1 \oplus y) (1 \oplus z) \oplus c (1 \oplus y z \oplus x y (1 \oplus z))) \oplus a ((1 \oplus c) y \oplus b (1 \oplus c y (x \oplus z \oplus x z))) == 1$\\	
	&$\frac{1}{5} $&$ c x y \oplus b y z \oplus a ((1 \oplus b) c \oplus (1 \oplus b \oplus y) z) == 1$\\	
	\hline 
	177
	&$\frac{2}{5} $&$ b (x (1 \oplus y) z \oplus c (1 \oplus y z \oplus x y (1 \oplus z))) \oplus a (y (1 \oplus c \oplus z) \oplus b (1 \oplus y z \oplus c y (x \oplus z \oplus x z))) == 1$\\	
	&$\frac{1}{5} $&$ y (c x \oplus (b \oplus x) z) \oplus a (c \oplus b c \oplus y z) == 1$\\	
	\hline 
	178
	&$\frac{2}{5} $&$ b (x (1 \oplus y) (1 \oplus z) \oplus c (1 \oplus y z \oplus x y (1 \oplus z))) \oplus a (y (1 \oplus c \oplus z) \oplus b (1 \oplus y z \oplus c y (x \oplus z \oplus x z))) == 1$\\	
	&$\frac{1}{5} $&$ y (c x \oplus (b \oplus x) z) \oplus a (c \oplus b c \oplus y z) == 1$\\	
	\hline 
	179
	&$\frac{2}{5} $&$ c y \oplus b (c \oplus x (1 \oplus y) z) \oplus a (c y \oplus b (1 \oplus (1 \oplus c) y z \oplus (1 \oplus c) x y (1 \oplus z))) == 1$\\	
	&$\frac{1}{5} $&$ b x y \oplus c x (1 \oplus b \oplus y) \oplus a (c \oplus b c \oplus y z) == 1$\\	
	\hline 
	180
	&$\frac{1}{2} $&$ a (c (1 \oplus b \oplus y) \oplus b y z) == 1$\\	
	&$\frac{1}{4} $&$ c x y \oplus a (b \oplus c y \oplus b z) \oplus b (c \oplus y \oplus x y \oplus z \oplus x y z) == 1$\\	
	\hline 
	181
	&$\frac{3}{5} $&$ a b c y == 1$\\	
	&$\frac{1}{5} $&$ (1 \oplus c) (a \oplus x) y \oplus b (y \oplus c (1 \oplus a \oplus a y)) == 1$\\	
	&$\frac{2}{5} $&$ a (1 \oplus y) (b \oplus c \oplus z) == 1$\\	
	\hline 
	182
	&$\frac{1}{2} $&$ b (a \oplus x) (c \oplus y z) == 1$\\	
	&$\frac{1}{4} $&$ c y \oplus b (1 \oplus a \oplus c \oplus y \oplus a z) \oplus a (c \oplus z \oplus y z) == 1$\\	
	\hline 
	183
	&$\frac{1}{2} $&$ a b (c \oplus y z) == 1$\\	
	&$\frac{1}{4} $&$ c x y \oplus a (1 \oplus b \oplus c \oplus y \oplus b z) \oplus b (c \oplus y \oplus x y \oplus z \oplus x y z) == 1$\\	
	\hline 
	184
	&$\frac{1}{3} $&$ a (b (1 \oplus c) \oplus (1 \oplus y) (c \oplus z)) \oplus b (c \oplus y (x \oplus z)) == 1$\\	
	&$\frac{2}{3} $&$ a b c y (x \oplus z) == 1$\\	
	\hline 
	185
	&$\frac{1}{4} $&$ y (a \oplus c \oplus c x \oplus a z) \oplus b (1 \oplus c \oplus y \oplus x z \oplus x y z \oplus a (1 \oplus c (1 \oplus y) (1 \oplus x z))) == 1$\\	
	&$\frac{1}{2} $&$ a c (x y z \oplus b (x z \oplus (1 \oplus x) y (1 \oplus z))) == 1$\\	
	&$\frac{3}{4} $&$ a b c (1 \oplus y) (1 \oplus x z) == 1$\\	
	\hline 
	186
	&$\frac{1}{4} $&$ c x y \oplus a y (1 \oplus z) \oplus b (1 \oplus c \oplus y \oplus x z \oplus x y z \oplus a (1 \oplus c (1 \oplus y) (1 \oplus x z))) == 1$\\	
	&$\frac{1}{2} $&$ a c ((1 \oplus x) y z \oplus b (y z \oplus x (y \oplus z \oplus y z))) == 1$\\	
	&$\frac{3}{4} $&$ a b c (1 \oplus y) (1 \oplus x z) == 1$\\	
	\hline 
	187
	&$\frac{1}{4} $&$ c x y \oplus a y z \oplus b (1 \oplus a \oplus c \oplus y \oplus x z \oplus x y z) == 1$\\	
	&$\frac{1}{2} $&$ a c (1 \oplus b \oplus x y \oplus b x y \oplus y z \oplus b y z \oplus x y z) == 1$\\	
	\hline 
	188
	&$\frac{1}{4} $&$ c x y \oplus a (b \oplus y \oplus y z) \oplus b (c \oplus x (1 \oplus y) z) == 1$\\	
	&$\frac{1}{2} $&$ a c (1 \oplus y \oplus y z \oplus x y z \oplus b (1 \oplus y (1 \oplus x \oplus z))) == 1$\\	
	\hline 
	189
	&$\frac{1}{4} $&$ c x y \oplus a y z \oplus b (a \oplus c \oplus (1 \oplus y) (1 \oplus z \oplus x z)) == 1$\\	
	&$\frac{1}{2} $&$ a c (1 \oplus b \oplus x y \oplus b x y \oplus y z \oplus b y z \oplus x y z) == 1$\\	
	\hline 
	190
	&$\frac{1}{4} $&$ c x y \oplus a (b \oplus y z) \oplus b (c \oplus x (1 \oplus y) z) == 1$\\	
	&$\frac{1}{2} $&$ a c (1 \oplus b \oplus x y \oplus b x y \oplus y z \oplus b y z \oplus x y z) == 1$\\	
	\hline 
\end{longtable}

 \renewcommand*{\arraystretch}{1.4}
 \begin{longtable}{ccl}
 \hline
  Class & Prob. & Condition for  RC Extremal Boxes which are also in the No-signaling  Polytope \\ \\
  \hline
  \endfirsthead
  \multicolumn{3}{r}%
  {\tablename\ \thetable\ -- \textit{Continuation form previous page...}} \\
  \hline
  Class & Prob. & Condition  \\
  \hline
  \endhead
  \hline \multicolumn{3}{r}{\textit{Continuation on next page...}} \\
  \endfoot
  \hline
  \endlastfoot
  
1
&$1 $&$ a b c == 1$\\	

 \hline 
2
&$\frac{1}{3} $&$  (1 \oplus x) y (c \oplus z) \oplus b (c \oplus z \oplus x z) \oplus a (b \oplus c \oplus b c \oplus y z)  == 1$\\	
&$\frac{2}{3} $&$ a b c (1 \oplus x) y z  == 1$\\	
 \hline 
3
&$\frac{1}{2} $&$ a (b \oplus c \oplus y z) == 1$\\	
	
 \hline 
4
&$\frac{1}{2} $&$ b (a \oplus c \oplus x z)  == 1$\\	

 \hline 
5
&$\frac{1}{3} $&$  a (b \oplus c \oplus b c) \oplus c x y \oplus b (c \oplus z \oplus x z) == 1$\\	

 \hline 
6
&$\frac{1}{3} $&$ c (b \oplus y \oplus x y) \oplus a (b \oplus c \oplus b c \oplus z \oplus y z) == 1$\\	

 \hline 
 \end{longtable}

	\bibliographystyle{unsrt}
	\bibliography{References-2}

\end{document}